
\documentclass[11pt]{article}

\usepackage{epstopdf}

\usepackage{latexsym,amsmath,amssymb,stmaryrd}
\usepackage{graphicx,epsfig}
\usepackage[ddmmyyyy]{datetime}

\usepackage{amsthm}
\usepackage[ruled,noend,noline]{algorithm2e}
\usepackage{url}

\usepackage{multirow}
\usepackage{tabularx}
\usepackage{booktabs}

\newcolumntype{Y}{>{\raggedleft\arraybackslash}X}
\usepackage{siunitx}

\renewcommand{\epsilon}{\varepsilon}
\renewcommand{\phi}{\varphi}

\def\IN{{\mathbb N}}

\def\pref{\mathop{{\rm pref}}}
\newcommand{\floor}[1]{\lfloor #1 \rfloor}

\def\lex{{\rm lex}}
\def\CritSet{\ensuremath{\mathop{\rm CritSet}}}
\newcommand{\rank}{{\textit{rank}}}

\def\PNW{{\cal PN}}

\newcommand{\flip}{\ensuremath{\mathop{\rm flip}}}
\newcommand{\op}{\ensuremath{\mathop{\rm op}}}
\newcommand{\flipext}{\ensuremath{\mathop{\rm flipext}}}
\newcommand{\bubble}{\ensuremath{\mathop{\rm bubble}}}
\newcommand{\RightmostOne}{r}

\newtheorem{theorem}{Theorem}
\newtheorem{lemma}{Lemma}
\newtheorem{corollary}{Corollary}
\newtheorem{observation}{Observation}
\newtheorem{fact}{Fact}
\newtheorem{example}{Example}
\newtheorem{exa}[example]{Example}

\usepackage{color}

\usepackage{footmisc}
\usepackage[numbers,sort]{natbib}

\newtheorem{definition}{Definition}

\newcommand*{\email}[1]{%
	\normalsize #1 \par
}

\newcommand{\coraut}{
	\let\oldthefootnote=\thefootnote%
	\setcounter{footnote}{0}%
	\renewcommand{\thefootnote}{\fnsymbol{footnote}}%
	\footnote{Corresponding author}
	\let\thefootnote=\oldthefootnote%
}

\begin{document}

\title{Bubble-Flip---A New Generation Algorithm for Prefix Normal Words}

\author{Ferdinando Cicalese,\, Zsuzsanna Lipt{\'a}k\footnote{Corresponding author.} \,and\, Massimiliano Rossi \\ \\ \small Dipartimento di Informatica, University of Verona\\ \small Strada le Grazie, 15, 37134 Verona, Italy\\
	\email{$\{$ferdinando.cicalese,zsuzsanna.liptak,massimiliano.rossi$\_01\}$@univr.it}}

\date{\bigskip \bigskip {\em Article published in Theoretical Computer Science (2018)} \\
doi: 10.1016/j.tcs.2018.06.021}
\providecommand{\keywords}[1]{\textbf{{keywords: }} #1}
\let\oldthefootnote=\thefootnote%
\setcounter{footnote}{0}%
\renewcommand{\thefootnote}{\fnsymbol{footnote}}%
\maketitle 
\let\thefootnote=\oldthefootnote%
\begin{abstract}
We present a new recursive generation algorithm for prefix normal words. These are binary words with the property that no factor has more $1$s than the prefix of the same length. The new algorithm uses two operations on binary words, which exploit certain properties of prefix normal words in a smart way. We introduce infinite prefix normal words and show that one of the operations used by the algorithm, if applied repeatedly to extend the word, produces an ultimately periodic infinite word, which is prefix normal. Moreover, based on the original finite word, we can predict both the length and the density of an ultimate period of this infinite word\footnote{This is an extended version of our paper presented at  LATA 2018~\cite{CLR_LATA18}.}.

\medskip

\noindent \keywords{algorithms on automata and words, combinatorics on words, combinatorial generation, prefix normal words, infinite words, binary languages, combinatorial Gray code}
\end{abstract}

\section{Introduction}\label{sec:introduction}

Prefix normal words are binary words with the property that no factor has more $1$s than the prefix of the same length. For example, $11001010$ is prefix normal, but $11001101$ is not, since the factor $1101$ has too many $1$s. These words were introduced in~\cite{FL11}, originally motivated by the problem of Jumbled Pattern Matching~\cite{IJFCS12,MoosaR_JDA12,GiaGrab_IPL13,PhTRS-A14,ChanL15,AmirCLL_ICAPL14,CicaleseLWY14,AmirAHLLR16,GagieHLW15,KociumakaRR17,CunhaDGWKS17}.%

Prefix normal words have however proved to have diverse other connections~\cite{BFLRS_CPM14,BGLRS_FUN14,BFLRS_TCS17}. Among these, it has been shown that prefix normal words form a {\em bubble language}~\cite{RSW12,SW12,RSW12a}, a family of binary languages which include Lyndon words, $k$-ary Dyck words, necklaces, and other important classes of binary words. These languages have efficient generation algorithms\footnote{Here, the term {\em efficient} is used in the sense that the cost per output word should be small---in the best case, this cost is constant amortized time (CAT).}, and can be listed as (combinatorial) Gray codes, i.e.\ listings in which successive words differ by a constant number of operations. More recently, connections of the language of prefix normal words to the Binary Reflected Gray Code have been discovered~\cite{SWW17},  and prefix normal words have proved to be applicable to certain graph problems~\cite{Blondin17a}. Moreover,  three different sequences related to prefix normal words are present in the On-Line Encyclopedia of Integer Sequences (OEIS~\cite{sloane}): A194850 (the number of prefix normal words of length $n$), A238109 (a list of prefix normal words over the alphabet $\{1,2\}$), and A238110 (equivalence class sizes of words with same prefix normal form, a related concept from~\cite{BFLRS_TCS17}).

In this paper, we present a new recursive generation algorithm for prefix normal words of fixed length. In combinatorial generation, the aim is to find a way of efficiently listing (but not necessarily outputting) each one of a given class of combinatorial objects. Even though the number of these objects may be very large, typically exponential, in many situations it is necessary to be able to examine each one of them: this is when combinatorial generation algorithms are needed.
The latest volume 4A of Donald Knuth's {\em The Art of Computer Programming} devotes over 200 pages to combinatorial generation of basic combinatorial patterns, such as permutations and bitstrings~\cite{TAOCP4A}, and much more is planned on the topic~\cite{TAOCPweb}.

The previous generation algorithm for prefix normal words of length $n$
runs in amortized linear time per word~\cite{BFLRS_CPM14}, while it was conjectured there that its running time is actually amortized $O(\log n)$ per word, a conjecture which is still open. Our new algorithm recursively generates all prefix normal words from a seed word, applying two operations, which we call {\em bubble} and {\em flip}.  Its running time is $O(n)$ per word, and it allows new insights into properties of prefix normal words. It can be applied (a) to produce all prefix normal words of fixed length, or (b) to produce all prefix normal words of fixed length sharing the same {\em critical prefix}. (The critical prefix of a binary word is the first run of $1$s followed by the first run of $0$s.) This could help proving a conjecture formulated in~\cite{BFLRS_CPM14}, namely that the expected critical prefix length of an $n$-length prefix normal word is $O(\log n)$. Moreover, it could prove useful in counting prefix normal words of fixed length: it is easy to see that this number grows exponentially, however, neither a closed form nor a generating function are known~\cite{BFLRS_TCS17}. Finally, a slight change in the algorithm produces a (combinatorial) Gray code on prefix normal words of length $n$.

\medskip

While both algorithms generate prefix normal words recursively, they differ in fundamental ways. The algorithm of~\cite{BFLRS_CPM14} is an application of a general schema for generating bubble languages, using a language-specific oracle. It generates separately the sets of prefix normal words with fixed weight $d$, i.e.\ all prefix normal words of length $n$ containing $d$ $1$s. The computation tree is not binary, since each word $w$ can have up to $t$ children, where $t$ is the number of $0$s in the first run of $0$s of $w$. The algorithm uses an additional linear size data structure which it inherits from the parent node and modifies for the current node. A basic feature of the computation tree is that all words have the same fixed suffix, in other words, for the subtree rooted in the word $w = 1^s0^t\gamma$, all nodes are of the form $v\gamma$, for some $v$.

In contrast, our new algorithm generates all prefix normal words of length $n$ (except for $0^n$ and $10^{n-1}$) in one single recursive call, starting from $110^{n-2}$. The computation tree is binary, since each word can have at most two children, namely the one produced by the operation {\em bubble}, and the one by {\em flip}. Finally, for certain words $w$, the words in the subtree rooted in $w$ have the same critical prefix as $w$. This last property allows us to explore the sets of prefix normal words with fixed critical prefix.

\medskip

In the final part of the paper, we prove some surprising results about extending prefix normal words. Note that if $w$ is prefix normal, then so is $w0$, but not necessarily $w1$. We introduce infinite prefix normal words and show that repeatedly applying the flip-operation used by the new algorithm---in a modified version which {\em extends} finite words---produces, in the limit, an ultimately periodic infinite prefix normal word. Moreover, we are able to predict both the length and the density of the period, and give an upper bound on when the period will appear.

Part of the results of the present paper were presented in a preliminary form in~\cite{CLR_LATA18}.

\section{Basics}\label{sec:basics}

A (finite) binary word (or string) $w$ is a finite sequence of elements from $\{0,1\}$. We denote the $i$'th character of $w$ by $w_i$, and its length by $|w|$. Note that we index words from $1$. The empty word, denoted $\epsilon$, is the unique word with length $0$. The set of binary words of length $n$ is denoted $\{0,1\}^n$ and the set of all finite words by $\{0,1\}^*$. For two words $u,v$, we write $w=uv$ for their concatenation. For an integer $k\geq 1$ and $u\in \{0,1\}^n$, $u^k$ denotes the $k\cdot n$-length word $uuu\cdots u$ ($k$-fold concatenation of $u$). If $w=uxv$, with $u,x,v\in \{0,1\}^*$ (possibly empty), then $u$ is called a prefix, $x$ a factor (or substring), and $v$ a suffix of $w$. We denote by $w_i\cdots w_j$, for $i\leq j$, the factor of $w$ spanning the positions $i$ through $j$. For a word $u$, we write $|u|_1$ for the number of $1$s in $u$. We denote by $\leq_{\lex}$ the lexicographic order between words.

We denote by $\pref_i(w)$ the prefix of $w$ of length $i$, and by $P_w(i) = |\pref_i(w)|_1$, the number of $1$s in the prefix of length $i$. (In
the context of
succinct indexing, this function is often called $\rank_1(w,i)$.) If clear from the context, we write $P(i)$ for $P_w(i)$.

\begin{definition}[Prefix normal words, prefix normal condition]
A word $w\in \{0,1\}^*$ is called prefix normal if, for all factors $u$ of $w$, $|u|_1 \leq P_w(|u|)$. We denote the set of all finite prefix normal words by ${\cal L}$, and the set of prefix normal words of length $n$ by ${\cal L}_n$.
Given a binary word $w$, we say that a factor $u$ of $w$ {\em satisfies the prefix normal condition} if $|u|_1 \leq P_w(|u|)$.
\end{definition}

\begin{exa}
The word $1101000100110100$ is not prefix normal because the factor $1001101$ violates the prefix normal condition.
\end{exa}

It is easy to see that the number of prefix normal words grows exponentially, by noting that $1^nw$ is prefix normal for any $w$ of length $n$. In Table~\ref{tab:s2n32}, we list all prefix normal words for lengths  $n\leq 5$. Finding the number of prefix normal words of length $n$ is a challenging open problem, see~\cite{BFLRS_TCS17} for partial results. The cardinalities of ${\cal L}_n$ for $n\leq 50$ can be found in the On-Line Encyclopedia of Integer Sequences (OEIS~\cite{sloane}) as sequence A194850.

\begin{table}
	\centering
	\caption{The set ${\cal L}_n$ of prefix normal words of length $n$ for $n=1,2,3,4,5$.}
	\begin{tabularx}{\textwidth}{>{\centering}X|>{\centering}X|>{\centering}X>{\centering}X|>{\centering}X>{\centering}X>{\centering}X |>{\centering}X>{\centering}X>{\centering}X>{\centering}XX}
		\toprule
		${\cal L}_1$ & ${\cal L}_2$ & \multicolumn{2}{c|}{${\cal L}_3$} &  \multicolumn{3}{c|}{${\cal L}_4$} & \multicolumn{5}{c@{}}{${\cal L}_5$}\\
		\cmidrule(l){1-12}
		0 &  00 & 000 & 110 & 0000 & 1010 & 1110 & 00000 & 10010 & 11000 & 11011 & 11110\\
		1 &  10 & 100 & 111 & 1000 & 1100 & 1111 & 10000 & 10100 & 11001 & 11100 & 11111\\
		  &  11 & 101 &     & 1001 & 1101 &      & 10001 & 10101 & 11010 & 11101 &\\
		\bottomrule
	\end{tabularx}%
	\label{tab:s2n32}
\end{table}

Next, we give some basic facts about prefix normal words which will be needed in the following.

\begin{fact}[Basic facts about prefix normal words~\cite{BFLRS_TCS17}] \label{basicfacts}
Let $w\in\{0,1\}^n$.
\begin{enumerate}

\item[(i)] If $w\in {\cal L}$, then either $w=0^n$ or $w_1=1$.
\item[(ii)] $w \in {\cal L} $ if and only if $\pref_i(w) \in {\cal L}$ for $i = 1, \dots, n.$
\item[(iii)] If $w \in {\cal L}$ then $w0^i \in {\cal L}$ for all $i=1, 2, \dots.$
\item[(iv)] Let $w \in {\cal L}$. Then $w 1 \in {\cal L}$ if and only if for all $1\leq i< n$, we have
$P_w(i+1) > |w_{n-i+1} \cdots w_n|_1.$
\end{enumerate}
\end{fact}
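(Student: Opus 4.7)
The plan is to establish each of the four items by unwinding the definition of prefix normality.

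For (i), I would argue that if $w \in {\cal L}$ with $w_1 = 0$, then $P_w(1) = 0$; any single-character factor equal to $1$ occurring inside $w$ would then violate the prefix normal condition, so $w$ must contain no $1$s and hence equal $0^n$.

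For (ii), in the forward direction I would observe that every factor of $\pref_i(w)$ is a factor of $w$ and that prefixes of $\pref_i(w)$ of length up to $i$ coincide with the corresponding prefixes of $w$, so the prefix normal condition transfers from $w$ to $\pref_i(w)$ for each $i$. The reverse direction is immediate on taking $i = n$.

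For (iii), I would classify each factor $u$ of $w 0^i$ as lying entirely within $w$, entirely within the trailing run $0^i$, or straddling the boundary. The first case follows from $w \in {\cal L}$ combined with $P_{w 0^i}(j) = P_w(j)$ for $j \leq n$. The second is trivial since $|u|_1 = 0$. In the third I would write $u = v \cdot 0^k$ with $v$ a non-empty suffix of $w$ and estimate $|u|_1 = |v|_1 \leq P_w(|v|) = P_{w 0^i}(|v|) \leq P_{w 0^i}(|u|)$, using monotonicity of $P$.

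For (iv), the factors of $w 1$ that are not already factors of $w$ are exactly the non-empty suffixes of $w 1$ that include the last position. A suffix of length $i+1$ has the form $w_{n-i+1}\cdots w_n \cdot 1$ and weight $|w_{n-i+1}\cdots w_n|_1 + 1$, so the prefix normal condition reads $|w_{n-i+1}\cdots w_n|_1 + 1 \leq P_{w1}(i+1) = P_w(i+1)$, which rearranges to the stated strict inequality for the non-trivial range $1 \leq i < n$. Factors sitting entirely inside $w$ are handled by $w \in {\cal L}$ together with $P_{w1}(j) = P_w(j)$ for $j \leq n$. I expect the main obstacle to be the careful verification that the boundary cases---the length-$1$ suffix $1$ and the length-$(n+1)$ suffix $w 1$ itself---are automatically satisfied (the former when $w_1 = 1$, which by (i) is the relevant case; the latter trivially, since $|w1|_1 = P_{w1}(n+1)$) and hence need not appear in the stated condition.
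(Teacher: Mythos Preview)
The paper does not actually prove this fact; it is stated with a citation to~\cite{BFLRS_TCS17} and then used as a black box throughout. There is therefore nothing to compare against, and your direct verification from the definition is exactly the intended elementary argument.

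Two small points on (iv). First, the phrase ``factors of $w1$ that are not already factors of $w$'' is not literally the correct dichotomy (a suffix of $w1$ may well also occur inside $w$); what you need, and what you in fact use afterwards, is the split into factors contained in positions $1,\ldots,n$ versus suffixes reaching position $n{+}1$. Second, your treatment of the length-$1$ suffix appeals to (i) to assume $w_1=1$; to complete the equivalence you should also note that when $w=0^n$ (with $n\ge 2$) the stated inequality already fails at $i=1$, so both sides are false. With these clarifications your proof is complete.
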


We will define several operations on binary words in this paper. For an operation $\op : \{0,1\}^* \to \{0,1\}^*$, we denote by $\op^{(i)}$ the $i$'th iteration of $\op$.
We denote by $\op^*(w) = \{ \op^{(i)}(w) \mid i\geq 1\},$ the set of words obtainable from $w$ by a finite number of applications of $\op$.

Finally, we introduce the {\em critical prefix} of word. The length of the critical prefix plays an important role in the analysis of the previous generation algorithm for prefix normal words~\cite{BFLRS_CPM14}.

\begin{definition}[Critical prefix] \label{defi:criticalprefix}
Given a non-empty binary word $w$, it can be uniquely written in the form $w = 1^s0^t\gamma$, where $s,t\geq 0$, $s=0$ implies $t>0$, and $\gamma \in 1\{0,1\}^* \cup \{\epsilon\}$. We refer to $1^s0^t$ as the {\em critical prefix} of $w$.
\end{definition}

\begin{exa} For example, the critical prefix of $1100001001$ is $110000$, that of $0011101001$ is $00$, while the critical prefix of $1111000000$ is $1111000000$.
\end{exa}

In~\cite{BFLRS_CPM14}, it was conjectured that the expected length of the critical prefix of a prefix normal word of length $n$ is $O(\log n)$. This conjecture is still open. In Section~\ref{sec:critical_prefix}, we will see how to adapt our algorithm to generate all prefix normal words with critical prefix $1^s0^t$ in one run.

\medskip

To close this section, we briefly discuss {\em combinatorial Gray codes}. Recall that a {\em Gray code} is a  listing of all bitstrings (or binary words) of length $n$ such that two successive words differ by exactly one bit. In other words, a Gray code is a sequence $w^{(1)},w^{(2)},\ldots, w^{(2^n)}\in \{0,1\}^n$ such that $d_H(w^{(i)},w^{(i+1)}) = 1$ for $i=1,\ldots,2^n-1$, where $d_H(x,y) = |\{ 1\leq j \leq n : x_j \neq y_j\}|$ is the {\em Hamming distance} between two equal-length words $x$ and $y$.

This definition has been generalized in several ways, we give a definition following~\cite[ch.\ 5]{Ruskeybook}.

\begin{definition}[Combinatorial Gray Code] Given a set of combinatorial objects ${\cal S}$ and a relation $C$ on ${\cal S}$ (the closeness relation), a {\em combinatorial Gray code} for ${\cal S}$ is a listing $s_1, s_2, . . . , s_{|{\cal S}|}$ of the elements of ${\cal S}$, such that $(s_i,s_{i+1}) \in C$ for $i = 1,2,...,|{\cal S}|-1$. If we also require that $(s_{|{\cal S}|},s_1) \in C$, then the code is called {\em cyclic}.
\end{definition}

In particular, given a listing of the elements of a binary language ${\cal S}\subseteq \{0,1\}^n$, such that each two subsequent words have Hamming distance bounded by a constant, then this  listing is a combinatorial Gray code for ${\cal S}$. Note that the specifier 'combinatorial' is often dropped, so the term {\em Gray code} is frequently used in this more general sense.

\section{The Bubble-Flip algorithm}\label{sec:algorithm}

In this section we present our new generation algorithm for all prefix normal words of a given length. We show that the
words are generated in lexicographic order. We also show
how our procedure can be easily adapted to generate all prefix normal words of a given length with the same
critical prefix.

\subsection{The algorithm}

Let $w \in \{0,1\}^n.$ We let $\RightmostOne(w)$ be the largest index $r$ such that $w_r=1$,
if it exists, and $\infty$ otherwise.
We will use the following  operations on prefix normal words:

\begin{definition}[Operation \flip]\label{def:flip}
Given $w  \in \{0,1\}^n$, and $1 \leq j \leq n$, we define $\flip(w,j)$ to be the binary word obtained by changing the $j$-th character in $w$, i.e.,   $\flip(w,j) = w_1 w_2 \cdots w_{j-1} \overline{w}_j w_{j+1}\cdots w_n$, where $\overline{x}$ is $1-x$.
\end{definition}

\begin{definition}[Operation \bubble]\label{def:bubble}
 Given $w  \in \{0,1\}^n \setminus \{0^n\}$ and $r = \RightmostOne(w) < n$,  we define
 $\bubble(w) = w_1 w_2 \cdots w_{r-1}0 1 0^{n-r-1}$,
i.e., the word obtained from $w$ by shifting the rightmost $1$ one position to the right.
\end{definition}

We start by giving a simple characterization of those $\flip$-operations which preserve prefix normality.

\begin{lemma} \label{lemma:not pn}
Let $w\in {\cal L}_n$ such that $r = \RightmostOne(w)<n$  and let  $j$ be an index with
$r<j\leq n$. Then $w'= \flip(w,j)$ is not prefix normal if and only if there exists a $1 \leq k < r$ such that  $|w_{r-k+1} \cdots w_r|_1 = P_w(k)$ and $|w_{k+1} \cdots w_{k+j-r} |_1= 0$.
\end{lemma}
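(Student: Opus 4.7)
The plan is to analyze what a prefix-normal violation in $w' = \flip(w,j)$ must look like, using the fact that $w$ and $w'$ agree everywhere except at position $j$, and that $w_j = 0$ while $w'_j = 1$. This gives $P_{w'}(i) = P_w(i)$ for $i < j$ and $P_{w'}(i) = P_w(i)+1$ for $i \geq j$.

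For the forward direction, I assume that $w'$ is not prefix normal and pick a shortest offending factor $u = w'_a \cdots w'_b$ with $|u|_1 > P_{w'}(|u|)$. Since $w$ is prefix normal, $u$ must contain position $j$. I then rule out two cases: if $|u| \geq j$, then $P_{w'}(|u|) = P_w(|u|)+1$ and the violation would descend to $w$, contradicting $w \in \L$; if $a > r$, then $|u|_1 = 1$ (only position $j$ contributes), but Fact~1(i) forces $w_1 = 1$, hence $P_w(|u|) \geq 1$, so no violation. Thus $a \leq r < j \leq b$ and $|u| < j$. Writing $k = r - a + 1$ (so $1 \leq k < r$, the case $a=1$ being excluded by $|u|<j$) and $\ell = b-r$, I have $|u|_1 = |w_{r-k+1}\cdots w_r|_1 + 1$. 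The violation then reads $|w_{r-k+1}\cdots w_r|_1 \geq P_w(k+\ell)$. Combined with the prefix-normal inequality $|w_{r-k+1}\cdots w_r|_1 \leq P_w(k)$ and monotonicity $P_w(k) \leq P_w(k+\ell)$, I obtain equality throughout, which yields both $|w_{r-k+1}\cdots w_r|_1 = P_w(k)$ and $|w_{k+1}\cdots w_{k+\ell}|_1 = 0$. Since $\ell \geq j-r$, the zero condition restricts to $|w_{k+1}\cdots w_{k+j-r}|_1 = 0$, giving the stated property.

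For the backward direction, given $k$ satisfying the two conditions, I explicitly exhibit the violating factor $u = w'_{r-k+1}\cdots w'_{j}$, of length $k + (j-r) < j$. Its number of ones equals $|w_{r-k+1}\cdots w_r|_1 + 1 = P_w(k)+1$ by the first condition, while $P_{w'}(k+j-r) = P_w(k+j-r) = P_w(k)$ by the second condition plus $k+j-r<j$. Hence $|u|_1 > P_{w'}(|u|)$, showing $w' \notin \L$.

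The main obstacle will be bookkeeping the case analysis on where the offending factor $u$ sits relative to positions $r$ and $j$, and making sure that the length constraint $|u| < j$ is actually forced (so that $P_{w'}$ and $P_w$ coincide on $|u|$); everything else is a clean manipulation of prefix sums and uses Fact~1(i) only to exclude the degenerate case $a > r$.
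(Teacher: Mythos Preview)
Your proof is correct and follows essentially the same approach as the paper's: the backward direction is identical (exhibit the factor $w'_{r-k+1}\cdots w'_j$), and the forward direction locates an offending factor, pins down its position relative to $r$ and $j$, and derives the two conditions via the prefix-normality of $w$ and the monotonicity of $P_w$. The only cosmetic difference is that the paper first reduces to $\pref_j(w')$ via Fact~\ref{basicfacts}\,{\em (ii)--(iii)} and then invokes Fact~\ref{basicfacts}\,{\em (iv)} to get a bad \emph{suffix} of $w_1\cdots w_{j-1}$, whereas you work directly with an offending factor of $w'$ and handle the extra cases ($|u|\geq j$, $a>r$, and $b>j$) by hand; both routes arrive at the same equalities in the same way.
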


\begin{proof}
If there exists a $1 \leq k < r$ such that $|w_{r-k+1} \cdots w_r|_1 = P_w(k)$ and $|w_{k+1} \cdots w_{k+j-r} |_1= 0$, then for the factor $u = w'_{r-k+1} \cdots w'_j$ of $w'$, we have $|u| = k+(j-r)$ and $|u|_1 = P_{w'}(k) + 1 > P_{w'}(k+(j-r)) = P_{w'}(|u|)$, thus $w'$ is not prefix normal.

Conversely, note that $w' \in {\cal L}$ if and only if $v=\pref_j(w') \in {\cal L}$, by Fact~\ref{basicfacts} {\em (ii)} and {\em (iii)}. If $v\not\in {\cal L}$, then, by Fact~\ref{basicfacts} {\em (iv)}, there exists a suffix $u$  of $w_1\cdots w_{j-1}$ such that $|u|_1 \geq P_w(|u|+1)$
. Clearly, $u$ cannot be shorter than $j-r-1$,  since then $|u|_1 = 0 < P_w(|u|+1)$, since $w$ is prefix normal and contains at least one $1$. So $u$ spans the position $r$ of the last one of $w$. Let us write $u = u'0^{j-r-1}$, with $k := |u'|$. So we have
$P_w(k) \geq |u'|_1 = |u|_1 \geq P_w(|u|+1)$, implying $|u'|_1 = |w_{r-k+1}\cdots w_{r}|_1 = P_w(k)$ by monotonicity of $P$. Moreover, again by the monotonicity of $P$, we get $P_w(k) = P_w(|u|+1)$, which implies that the factor $w_{k+1}\cdots w_{k+j-r}$ consists of only $0$s. 
\end{proof}

\IncMargin{1em}
\begin{algorithm}
	\DontPrintSemicolon
	{Given a prefix normal word $w$, computes the leftmost index $j$, after the rightmost 1 of $w$, such that $\flip(w,j)$ is prefix normal}\;
	\LinesNumbered
	\BlankLine

\nl	$r \leftarrow \RightmostOne(w), \,  f \leftarrow 0, \, g \leftarrow 0, \,
	i \leftarrow 1, \, max \leftarrow 0$\;

\nl	\While{$i<r$}{
\nl		$f \leftarrow f + w_i, \, g \leftarrow g + w_{r-i+1}$\;
\nl		\If{$f = g$}{
\nl			$l \leftarrow 0, \, i \leftarrow i+1$\;
\nl			\While{$i < r$ and $w_i = 0$}{\label{line:found longest run of zeros}
\nl				$l \leftarrow l+1, \, i \leftarrow i+1$\;
			}
\nl			\If{$l > max$}{
\nl				$max \leftarrow  l$\;
			}
		}
\nl		\Else{
\nl				$i \leftarrow i+1$\;
		}
	}
\nl	\Return{$\min\{r+max+1, n+1\}$\;}
	\caption{{\sc Compute $\varphi$}}\label{algo:phi}
\end{algorithm}
\DecMargin{1em}

\begin{definition}[Phi]
	Let $w\in {\cal L}_n \setminus \{0^n\}$.
	 Let $r =  \RightmostOne(w)$. Define $\varphi (w)$ as the minimum $j$ such that $r < j \leq n$
	 and $\flip(w,j)$ is prefix normal, and $\varphi(w) = n+1$ if no such $j$ exists.
\end{definition}

\begin{exa}
	For the word $w=1101001001011000$, we have $\varphi(w)=16$, since the words $\flip(w,14)$ and $\flip(w,15)$ both violate the prefix normal condition, for the prefixes of length $3$ and $6$, respectively.
\end{exa}

\begin{lemma}\label{lemma:varphi is longest run of 0s}
	Let $w\in {\cal L}_n \setminus \{0^n\}$ and let $r = \RightmostOne(w)$.
	Let $m$ be the maximum length of a run of zeros following a prefix of $w_1\cdots w_{r-1}$ which has the same number of 1s 	as the suffix of $w_1\cdots w_r$ of the same length. Formally,
\[ m = \max_{1 \leq \ell < r} \{ \ell :  \text{exists } k \text{ s.t. } |w_{r-k+1} \cdots w_r|_1 = P_w(k) \text{ and }
|w_{k+1} \cdots w_{k+\ell} |_1= 0 \}, \]

\noindent where we set the maximum of the empty set to $0$.
Then, $\varphi(w) = \min(r + m + 1,n+1)$.
	\end{lemma}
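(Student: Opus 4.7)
My plan is to derive the formula for $\varphi(w)$ directly from Lemma~\ref{lemma:not pn}. First I would introduce the change of variable $\ell = j - r$: by Lemma~\ref{lemma:not pn}, for each $j$ with $r < j \leq n$, the word $\flip(w,j)$ fails prefix normality iff there is some $1 \leq k < r$ with $|w_{r-k+1}\cdots w_r|_1 = P_w(k)$ and $|w_{k+1}\cdots w_{k+\ell}|_1 = 0$; call such an $\ell$ \emph{bad}. Then $\varphi(w)$ equals the smallest $j > r$ (with $j \leq n$) for which $\ell = j - r$ is not bad, or else $n+1$ if no such $j$ exists.

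Next I would establish the key monotonicity: if $\ell$ is bad, witnessed by some $k$, then every $\ell'$ with $1 \leq \ell' \leq \ell$ is also bad, witnessed by the \emph{same} $k$. Indeed, the condition on $k$ alone is unchanged, and $w_{k+1}\cdots w_{k+\ell'}$ is a prefix of $w_{k+1}\cdots w_{k+\ell}$, hence still all zeros. Therefore the set of bad $\ell$ is an initial segment $\{1, 2, \ldots, m^*\}$ of the positive integers, where $m^*$ is its maximum (taking $m^* = 0$ if no $\ell$ is bad). Moreover every bad $\ell$ automatically satisfies $\ell < r$: since $k \geq 1$ gives $k+1 \leq r$, the factor $w_{k+1}\cdots w_{k+\ell}$ cannot include position $r$ (where $w_r = 1$) without violating the zero condition, so $k+\ell \leq r-1$, giving $\ell \leq r - 2 < r$. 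Thus $m^*$ coincides with the quantity $m$ in the lemma statement, and the bad set is exactly $\{1, \ldots, m\}$.

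Finally I would conclude: the smallest $\ell \geq 1$ that is not bad is $m+1$, so the smallest $j > r$ with $\flip(w,j)$ prefix normal is $r + m + 1$, provided $r + m + 1 \leq n$. If instead $r + m + 1 > n$ (including the edge case $r = n$, where there are no candidate $j$ at all), then every $j \in \{r+1, \ldots, n\}$ satisfies $j - r \leq m$ and is bad, so $\varphi(w) = n+1$ by definition. Both cases merge into the formula $\varphi(w) = \min(r+m+1,\, n+1)$. The only subtle step I expect is the bookkeeping around the implicit bound $\ell < r$ appearing in the definition of $m$ versus the range $\ell \geq 1$ coming from Lemma~\ref{lemma:not pn}; the observation above that every bad $\ell$ is forced to satisfy $\ell < r$ reconciles the two and is what makes the closed form come out cleanly.
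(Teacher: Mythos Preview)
Your proof is correct and follows essentially the same route as the paper: both arguments invoke Lemma~\ref{lemma:not pn} and exploit the monotonicity that a witness $k$ for a given $\ell$ is automatically a witness for every smaller $\ell'$, so that the ``bad'' values form an initial segment whose maximum is $m$. Your presentation is slightly tidier---you package the two inequalities into the single initial-segment observation and make explicit the bound $\ell < r$ that the paper leaves implicit---but the underlying argument is the same.
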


\begin{proof}
	We first show that $\varphi(w) \leq r + m + 1.$ We can assume that $m < n-r,$ for otherwise the desired inequality holds by definition.
	Let $m' = m+1$. Then, there are no $j, k \in \{1, \dots, r-1\}$ such that $j-k = m', \, \, |w_1 \cdots w_k|_1 = |w_{r-k+1} \cdots w_r|_1 $ and
$ |w_{k+1} \cdots w_{j} |_1= 0.$ Thus, by Lemma \ref{lemma:not pn}, we have that $\flip(w, r+m') \in {\cal L},$ hence $\varphi(w)\leq r+m' = r+m+1.$

 \medskip
Let now $j, k$ be indices attaining the maximum in the definition of $m,$ i.e., $1<k<j<r, \, j-k = m$, $ |w_1 \cdots w_k|_1 = |w_{r-k+1} \cdots w_r|_1 $ and
$ |w_{k+1} \cdots w_{j} |_1= 0.$ Let $0 < m' \leq m$ then for $j' = k + m'$ we have
$ |w_1 \cdots w_k|_1 = |w_{r-k+1} \cdots w_r|_1 $ and
$ |w_{k+1} \cdots w_{j'} |_1= 0.$ Then, by Lemma \ref{lemma:not pn}, $\flip(w,r+m') \not \in {\cal L}.$
Hence $\varphi(w) > r+m',$ for  $m' \leq m,$ and in particular $\varphi(w) \geq r+m+1,$ which completes the proof.
 \end{proof}

Algorithm \ref{algo:phi} implements the idea of Lemma \ref{lemma:varphi is longest run of 0s} to compute $\varphi.$
For a given prefix normal word $w$, it finds the position $r$ of the rightmost $1$ in $w$.
Then, for each length $i$ such that the number of $1$s in $\pref_i(w)$ (counted by $f$) is  the same as the
number of 1s  in  $w_{r-i+1}\cdots w_r$ (counted by $g$),
the algorithm counts the number of  0s in $w$ following $\pref_i(w)$ and sets  $m$ to the maximum of the length of such runs of $0$'s. By
Lemma \ref{lemma:varphi is longest run of 0s} and the definition of $\varphi$ it follows that $\min\{r+m+1, n+1\}$ is equal to $\varphi,$ as
correctly returned by Algorithm \ref{algo:phi}. It is not hard to see that the algorithm has linear running time
since the two while-loops are only executed as long as $i < r$, and the variable $i$ increases at each iteration of either loop. Therefore, the total
number of iterations of the two loops together is upper bounded by $r \leq n.$ Thus, we have proved the following lemma:

\begin{lemma} \label{lemma:phicomputationlemma}
For $w \in {\cal L}_n \setminus \{0^n\},$ Algorithm \ref{algo:phi} computes $\varphi(w)$ in
$O(\RightmostOne(w))$, hence $O(n)$ time.
\end{lemma}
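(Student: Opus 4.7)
The plan is to prove two claims: (i) correctness---that Algorithm~\ref{algo:phi} returns $\varphi(w)$---and (ii) the $O(r)$ running-time bound. For correctness, Lemma~\ref{lemma:varphi is longest run of 0s} already gives $\varphi(w) = \min(r + m + 1, n+1)$, so it suffices to show that the variable $max$ computed by the algorithm equals $m$ at termination.

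The heart of the argument is a loop invariant: immediately after line~3 executes at counter value $i$, we have $f = P_w(i)$ and $g = |w_{r-i+1}\cdots w_r|_1$. I would prove this by induction on the number of outer iterations. The nontrivial case is when the inner while-loop has advanced $i$ by several positions without updating $f$ or $g$: because the skipped prefix positions are all $0$, $P_w$ does not change across them, so $f$ resyncs correctly on the next update at line~3. For $g$, I would invoke prefix-normality: whenever at an earlier index $k$ we had $f = g$ and $w_{k+1} = 0$, the mirrored suffix position $w_{r-k}$ must also equal $0$, for otherwise $|w_{r-k}\cdots w_r|_1 = P_w(k) + 1 > P_w(k+1)$, violating the prefix-normality of $w$. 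Iterating this observation across the entire skipped range shows that the missed suffix characters on the $g$-side are also all $0$, so $g$ stays in sync with the true suffix weight.

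Given the invariant, the outer loop's test $f = g$ correctly identifies indices $k$ meeting the condition in the definition of $m$, and the inner loop then measures precisely the length of the maximal run of $0$s starting at position $k+1$ (capped at $r-1$). Indices skipped inside the inner loop trivially satisfy $f = g$ as well, but can only contribute shorter $0$-runs, so they cannot improve the value of $max$. Hence $max = m$ at termination, and the returned value is $\min(r + max + 1, n+1) = \varphi(w)$. For the running-time bound, each iteration of either while-loop strictly increments $i$ under the guard $i < r$, so the total number of iterations of both loops combined is at most $r - 1$; together with the $O(r)$ cost of computing $\RightmostOne(w)$ on line~1 by a single left-to-right scan, the algorithm runs in $O(r) = O(\RightmostOne(w))$, which is $O(n)$. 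The main obstacle is the loop-invariant step, in particular the prefix-normality argument that keeps $g$ synchronized across inner-loop skips; everything else is routine bookkeeping.
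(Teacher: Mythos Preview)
Your proof is correct and follows essentially the same approach as the paper's: both invoke Lemma~\ref{lemma:varphi is longest run of 0s} for correctness and bound the running time via the monotone increase of $i$ through the two nested loops. Your explicit loop-invariant argument---in particular, the prefix-normality observation that forces the skipped suffix characters $w_{r-k},\dots$ to be $0$ and thus keeps $g$ synchronized across the inner-loop skips---makes rigorous a point that the paper's brief sketch merely asserts.
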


The next lemma gives the basis of our algorithm: applying either of the two operations $\flip(w,\phi(w))$ or $\bubble(w)$ to a prefix normal word $w$ results in another prefix normal word.

\begin{lemma}
	Let $w \in {\cal L}_n \setminus \{0^n\}$. Then the following holds:
	\begin{itemize}
		\item[a)] for every $\ell$, such that $\varphi(w)  \leq \ell \leq n$, $\flip(w,\ell)$ is prefix normal, and
		\item[b)] if $|w|_1 \geq 2$ then $\bubble(w)$ is prefix normal.

	\end{itemize}
\end{lemma}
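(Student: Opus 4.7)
The plan is to treat (a) and (b) separately, in each case reducing to results already established in the paper.

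For (a), I would argue by contraposition from Lemma~\ref{lemma:not pn}. Assume $\varphi(w) \leq \ell \leq n$ and, for contradiction, that $\flip(w,\ell)$ is not prefix normal. Lemma~\ref{lemma:not pn} then supplies an index $k$ with $1 \leq k < r$ such that $|w_{r-k+1}\cdots w_r|_1 = P_w(k)$ and $|w_{k+1}\cdots w_{k+\ell-r}|_1 = 0$. The existence of this zero-run witnesses that the integer $\ell - r$ lies in the set whose maximum defines $m$ in Lemma~\ref{lemma:varphi is longest run of 0s}, so $\ell - r \leq m$. But then $\ell \leq r + m < r + m + 1 = \varphi(w)$, contradicting $\ell \geq \varphi(w)$. (When $\varphi(w) = n+1$ the range is empty and the claim is vacuous.)

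For (b), I would write $\bubble(w) = (w_1 \cdots w_{r-1}\,0\,1) \cdot 0^{n-r-1}$, so that by Fact~\ref{basicfacts}(iii) it suffices to show the length-$(r+1)$ prefix is prefix normal. Setting $x := w_1 \cdots w_{r-1}\,0$, the plan is to apply Fact~\ref{basicfacts}(iv) to $x$: prefix normality of $x$ follows from Fact~\ref{basicfacts}(ii) and (iii), so the real work is to verify $P_x(i+1) > |x_{r-i+1}\cdots x_r|_1$ for every $1 \leq i < r$. Since $x_r = 0$ and $w_r = 1$, the right-hand side equals $|w_{r-i+1}\cdots w_{r-1}|_1 = |w_{r-i+1}\cdots w_r|_1 - 1 \leq P_w(i) - 1$ by the prefix normality of $w$. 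For $i+1 < r$ we have $P_x(i+1) = P_w(i+1) \geq P_w(i) > P_w(i) - 1$, and the inequality is immediate.

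The only delicate step, and the place where the hypothesis $|w|_1 \geq 2$ actually enters, is the boundary case $i+1 = r$, where $P_x(r)$ drops to $P_w(r) - 1$. Here I would invoke Fact~\ref{basicfacts}(i) to conclude $w_1 = 1$, which gives $|w_2\cdots w_{r-1}|_1 = P_w(r) - 2$; the required inequality then reads $P_w(r) - 1 > P_w(r) - 2$, which is tight but valid. This last step is the only subtlety: it is exactly the assumption $|w|_1 \geq 2$ (equivalently $P_w(r) \geq 2$) that keeps the bound meaningful.
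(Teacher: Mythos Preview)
Your argument for (a) is essentially the paper's, phrased contrapositively: both invoke Lemma~\ref{lemma:not pn} to produce a zero-run witness and Lemma~\ref{lemma:varphi is longest run of 0s} to bound its length by $m$, yielding $\ell \leq r+m < \varphi(w)$.

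For (b) you take a genuinely different route. The paper does not verify Fact~\ref{basicfacts}(iv) directly; instead it observes that $\bubble(w) = \flip(w', r+1)$, where $w' = w_1\cdots w_{r'}0^{n-r'}$ is $w$ with its rightmost $1$ erased and $r'$ is the position of the penultimate $1$ (which exists precisely because $|w|_1 \geq 2$). Since $\flip(w',r) = w \in \mathcal{L}$, one has $\varphi(w') \leq r < r+1$, and part (a) applied to $w'$ then gives $\bubble(w) \in \mathcal{L}$ immediately. Your direct verification is correct but more computational; the paper's reduction is slicker and makes transparent that (b) is really an instance of (a). One small expository remark: the inequality $P_w(r)-1 > P_w(r)-2$ holds regardless of the value of $P_w(r)$, so the hypothesis $|w|_1\geq 2$ is not needed for that arithmetic step as such. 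What it actually buys you is $r \geq 2$, so that $w_1$ and $w_r$ are distinct positions, the identity $|w_2\cdots w_{r-1}|_1 = P_w(r)-2$ is well-posed, and $x$ begins with $w_1=1$ rather than degenerating to $x=0$.
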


\begin{proof}
	Let $r = \RightmostOne(w)$. In order to show {\em a)} we can proceed as in the proof of the upper bound in
	Lemma \ref{lemma:varphi is longest run of 0s}. Fix $\varphi(w) \leq \ell \leq n,$ and let $m' = \ell - r.$
	Then, by Lemma \ref{lemma:varphi is longest run of 0s}, there exist no $1 < j < k < r$ such the $k - j = m'$ and
	$|w_1 \cdots w_k|_1 = |w_{r-k+1} \cdots w_r|_1 $ and
$ |w_{k+1} \cdots w_{j} |_1= 0.$ This, by Lemma \ref{lemma:not pn}, implies that $\flip(w, \ell) \in {\cal L}.$

For {\em b)}, let $r' = \max\{i < r \mid w_i = 1\},$ i.e., $r'$ is the position of the penultimate $1$ of $w$. Let
$w' = w_1 \cdots w_{r'}0^{n-r'}.$ By Fact \ref{basicfacts} we have that $w' \in {\cal L}.$  Moreover,
$r \geq \varphi(w'),$ since $\flip(w', r)=w \in {\cal L}.$ Therefore, by {\em a)} we have that $\bubble(w) = \flip(w', r+1) \in {\cal L}.$
 \end{proof}

\begin{definition}[$\PNW$]\label{def:descent}
	Given $w\in {\cal L}_n \setminus \{0^n\}$ with $r = \RightmostOne(w)$, we define $\PNW(w)$ as the set of all prefix normal words $v$ of length $n$ such that $v = w_1 \cdots w_{r-1} \gamma$ for some $\gamma$ with $|\gamma|_1 > 0.$ Formally,
	 $$ \PNW(w) = \{v \in {\cal L}_n \mid  v = w_1 \cdots w_{r-1} \gamma , %
	 |\gamma|_1 >0 \}.$$
We will use the convention  that  $\PNW(\flip(w,\varphi(w)))=\emptyset$ if $\varphi(w) > n$, and $\PNW(\bubble(w)) = \emptyset$ if $\RightmostOne(w) =n$, since then $\flip(w,\varphi(w))$ resp.\ $\bubble(w)$
 are undefined.
\end{definition}

\begin{lemma}\label{lemma:recursion}\label{disjointness}
Given $w\in {\cal L}_n \setminus \{0^n,10^{n-1}\}$, we have
\[\PNW(w) =\{w\} \cup \PNW(\flip(w,\varphi(w))) \cup \PNW(\bubble(w)).\]

Moreover, these three sets are pairwise disjoint.
\end{lemma}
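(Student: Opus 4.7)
The plan is to decompose $\PNW(w)$ by looking at how any $v \in \PNW(w)$ uses position $r = \RightmostOne(w)$ and the positions just to its right. Every $v \in \PNW(w)$ agrees with $w$ on positions $1,\ldots,r-1$ and has at least one $1$ somewhere from position $r$ onwards. This gives a natural three-way split: either $v_r = 0$, or $v_r = 1$ and $v = w$, or $v_r = 1$ and $v$ has another $1$ at some position $r' > r$. I will show these three cases correspond exactly to $\PNW(\bubble(w))$, $\{w\}$, and $\PNW(\flip(w,\varphi(w)))$ respectively.

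The forward inclusions ($\supseteq$) are essentially unpacking definitions: words in $\PNW(\bubble(w))$ preserve the prefix $w_1\cdots w_{r-1}0$, which refines the prefix $w_1\cdots w_{r-1}$, and carry a $1$ in the remaining suffix; words in $\PNW(\flip(w,\varphi(w)))$ preserve the even longer prefix $w_1\cdots w_r 0^{\varphi(w)-r-1}$, which again refines $w_1\cdots w_{r-1}$, and also carry a $1$ in the remaining suffix. Hence both sets sit inside $\PNW(w)$, as does the singleton $\{w\}$.

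For the reverse inclusion, the nontrivial case is $v_r = 1$ with $v \neq w$. Letting $r'$ be the leftmost $1$ of $v$ in positions $> r$, the crucial step is to argue $r' \geq \varphi(w)$. The prefix $\pref_{r'}(v) = w_1\cdots w_r 0^{r'-r-1}1$ is prefix normal by Fact~\ref{basicfacts}(ii). Appending $0^{n-r'}$ preserves prefix normality by Fact~\ref{basicfacts}(iii), and since $w_{r+1}\cdots w_n = 0^{n-r}$, this extension equals $\flip(w,r')$. Thus $\flip(w,r') \in {\cal L}$, so by the minimality defining $\varphi(w)$ we get $r' \geq \varphi(w)$. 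This implies $v$ shares the first $\varphi(w)-1$ characters with $\flip(w,\varphi(w))$ and carries a $1$ at position $r' \geq \varphi(w)$, placing it in $\PNW(\flip(w,\varphi(w)))$. The case $v_r = 0$ is easier: $v$ shares $w_1\cdots w_{r-1}0$ with $\bubble(w)$ and $|v_{r+1}\cdots v_n|_1 > 0$ follows from $|v_r\cdots v_n|_1 > 0$. I expect the $r' \geq \varphi(w)$ argument to be the main obstacle; the rest is bookkeeping.

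Disjointness then follows by inspecting the value at position $r$: words in $\PNW(\flip(w,\varphi(w)))$ have $v_r = w_r = 1$, since the preserved prefix has length $\varphi(w)-1 \geq r$, while words in $\PNW(\bubble(w))$ have $v_r = 0$; and $w$ lies in neither set, because its tail $w_{r+1}\cdots w_n$ is all zeros (so $w \notin \PNW(\flip(w,\varphi(w)))$) and $w_r = 1$ (so $w \notin \PNW(\bubble(w))$). The exclusion $w \notin \{0^n, 10^{n-1}\}$ enters implicitly: it guarantees $|w|_1 \geq 2$, so that $\bubble(w)$ is defined and prefix normal, making $\PNW(\bubble(w))$ meaningful; the degenerate cases $\varphi(w) = n+1$ or $\RightmostOne(w) = n$ are absorbed by the conventions in Definition~\ref{def:descent} and match the case split automatically.
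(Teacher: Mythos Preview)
Your proof is correct and follows essentially the same approach as the paper's: you split on the value of $v_r$, and for the nontrivial case $v_r=1$, $v\neq w$, you use Fact~\ref{basicfacts}(ii)--(iii) to conclude that $\flip(w,r')\in{\cal L}$ and hence $r'\geq\varphi(w)$, exactly as the paper does. Your treatment of disjointness and of the degenerate cases is somewhat more explicit than the paper's, but there is no substantive difference in method.
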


\begin{proof} It is easy to see that the sets $\{w\}$, $\PNW(\bubble(w))$, $\PNW(\flip(w,$ $\varphi(w)))$ are pairwise disjoint.

The inclusion $\PNW(w) \supseteq \{w\} \cup \PNW(\flip(w,\varphi(w))) \cup \PNW(\bubble(w))$ follows from the definition
of $\PNW$ (Def.~\ref{def:descent}) for each of the words $w, \flip(w,\phi(w)),$ and $\bubble(w)$.

Now let $x \in \PNW(w)\setminus\{w\}$ and $r = \RightmostOne(w).$
We argue by cases according to the character $x_r.$

{\em Case 1.} $x_r = 0.$ Then, $x = w_1\cdots w_{r-1} 0 \gamma$  for some $\gamma \in \{0,1\}^{n-r}$ such that
$|\gamma|_1 > 0.$ Since $\bubble(w) = w_1\cdots w_{r-1} 0 1 0^{n-r-1},$ it follows  that $x \in \PNW(\bubble(w)).$

{\em Case 2.} $x_r = 1.$ Then, since $x \neq w,$ we also have that $|x_{r+1}\cdots x_n|_1 > 0.$
Therefore,  $x = w_1\cdots w_{r-1} 1 \gamma$  for some $\gamma \in \{0,1\}^{n-r}$ such that
$|\gamma|_1 > 0.$

Let $r' = \min\{i > r \mid x_{r'} = 1\}$. Since $x \in {\cal L},$ we have that $\pref_r(x)0^{n-r},$ $\pref_{r'}(x)0^{n-r'} \in {\cal L}.$
Moreover, $\pref_{r'}(x)0^{n-r'} = \flip(\pref_r(x)0^{n-r}, r'),$ hence, $r' \geq \varphi(\pref_r(x)0^{n-r}) = \varphi(w).$
Therefore, $x = w_1\cdots w_r0^{\varphi(w)-r -1} \gamma$ for some $|\gamma|_1 > 1.$ This, by definition,  means that
$x \in \PNW(\flip(w, \varphi(w))).$

\end{proof}

We are now ready to give an algorithm computing all words in the set $\PNW(w)$ for a prefix normal word $w$. The pseudocode is given in Algorithm~\ref{algo:pnw_descent}.
The procedure generates recursively the set $\PNW(w)$ as the union of
$\PNW(\flip(w, \varphi(w)))$ and $\PNW(\bubble(w))$.
The call to subroutine $Visit()$ is a placeholder indicating that the algorithm has generated a new word in $\PNW(w)$, which could be printed, or examined, or processed, as required. By Lemma \ref{disjointness} we know that $Visit()$ is executed for each word in $\PNW(w)$ exactly once.

\IncMargin{1em}
\begin{algorithm}
	\DontPrintSemicolon
	{Given a prefix normal word $w$ such that $|w|_1 > 1$, generate the set $\PNW(w)$}\;
	\BlankLine

\nl	\If{$\RightmostOne(w) \neq n$}{
\nl		$w^{\prime} = \bubble(w)$\;
\nl		{\sc Generate } $\PNW (w^{\prime})$\;
	}
\nl	$Visit()$\;
\nl	$j = \varphi(w)$\;
\nl	\If{$j\leq n$}{
\nl		$w^{\prime\prime} = \flip(w,j)$\;
\nl		{\sc Generate } $\PNW(w^{\prime\prime})$\;
	}

	\caption{{\sc Generate} $\PNW(w)$}\label{algo:pnw_descent}
\end{algorithm}
\DecMargin{1em}

In order to ease the running time analysis, we next introduce a tree ${\cal T}(w)$ on $\PNW(w)$. This tree coincides with the computation tree of {\sc Generate} $\PNW(w)$, but it will be useful to argue about it independently of the algorithm.

\begin{definition}[Tree on $\PNW(w)$]
Let $w\in {\cal L}_n \setminus \{0^n,10^{n-1}\}$. Then we denote by ${\cal T}(w)$ the rooted binary tree ${\cal T}$ with $V({\cal T}) = \PNW(w)$, root $w$, and for a node $v$, (1) the left child of $v$ defined as empty if $v_n=1$ and as $\bubble(v)$ otherwise, and (2) the right child of $v$ as empty if $\phi(v) = n+1$, and as $\flip(v,\phi(v))$ otherwise.
\end{definition}

The tree $\PNW(w)$ has the following easy-to-see properties.

\begin{observation}[Properties of ${\cal T}(w)$]\label{obs:tree_properties}
There are three types of nodes: the root $w$, \bubble-nodes (left children), and \flip-nodes (right children).

\begin{enumerate}
\item The leftmost descendant of $w$ has maximal depth, namely $n-r$, where $r = \RightmostOne(w)$.
\item If a node $v$ has a right child, then it also has a left child.
\item If a node $v$ has no right child, then no descendant of $v$ has a right child. Thus in this case, the subtree rooted in $v$ is a path of length $n-r'$, consisting only of \bubble-nodes, where $r' = \RightmostOne(v)$.
\end{enumerate}
\end{observation}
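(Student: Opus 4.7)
The plan is to establish the three items in turn, relying on a single monotonicity principle: along any edge of $\mathcal{T}(w)$, the value of $\RightmostOne$ strictly increases. Indeed, if $v'$ is the left child of $v$ then $\RightmostOne(v') = \RightmostOne(v)+1$ by Definition~\ref{def:bubble}, while if $v''$ is the right child of $v$ then $\RightmostOne(v'') = \varphi(v) > \RightmostOne(v)$ by the definition of $\varphi$.

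For item 2, I would argue by contrapositive. If $v$ has no left child then $v_n = 1$, so $r := \RightmostOne(v) = n$; but then there is no $j$ with $r < j \leq n$, hence $\varphi(v) = n+1$ by convention, and $v$ has no right child either. For item 1, the monotonicity principle bounds every descendant: a node at depth $d$ in $\mathcal{T}(w)$ has rightmost-$1$ index at least $r+d$, and since this index never exceeds $n$, we obtain $d \leq n-r$. Conversely, the leftmost path $w,\bubble(w),\bubble^{(2)}(w),\ldots$ increases the rightmost-$1$ index by exactly $1$ per step and is well-defined as long as this index is $<n$, so it has exactly $n-r$ edges and realises the bound.

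The content of item 3 reduces, via the leftmost-path picture of item 1 and the contrapositive of item 2, to the following inductive step: \emph{if $v \in \mathcal{L}_n$ with $r := \RightmostOne(v) < n$ satisfies $\varphi(v) = n+1$, then $\varphi(\bubble(v)) = n+1$ as well}. To prove this key claim I would invoke Lemma~\ref{lemma:varphi is longest run of 0s} in the form $\varphi = \min(r+m+1,n+1)$. The hypothesis $\varphi(v)=n+1$ gives an index $k$ with $1 \leq k < r$, $|v_{r-k+1}\cdots v_r|_1 = P_v(k)$, and $|v_{k+1}\cdots v_{k+n-r}|_1 = 0$; since $v_r=1$ while the latter window is all zero, position $r$ cannot lie inside it, forcing $k+n-r \leq r-1$. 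Setting $v' = \bubble(v)$, $r' = r+1$, $k' = k+1$ and $\ell' = n-r-1 = n-r'$, I would check that the tight-count and zero-run conditions transfer: the trailing $1$ of $v$ at position $r$ is replaced by the pair $01$ in $v'$, so $|v'_{r'-k'+1}\cdots v'_{r'}|_1 = |v_{r-k+1}\cdots v_r|_1 = P_v(k)$; moreover $P_{v'}(k') = P_v(k) + v'_{k+1} = P_v(k)$, because $v'_{k+1}$ coincides either with $v_{k+1}$ (which lies in the original zero-run) or with $v'_r = 0$ in the boundary subcase $k = r-1$; and finally the positions $k+2,\ldots,k+n-r$ all lie in $\{1,\ldots,r-1\}$ by $k+n-r\leq r-1$, where $v'$ agrees with $v$ and is therefore zero. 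Hence $m_{v'} \geq n-r'$, so $\varphi(v') = n+1$. The full item 3 then follows by induction along the leftmost path from $v$, which by item 1 has length $n-r'$.

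The main obstacle is the index bookkeeping in the inductive step: one must carefully treat the degenerate cases $k=1$, $k=r-1$, and $n=r+1$ in which some of the intervals collapse, and one must use the inequality $k+n-r\leq r-1$ to guarantee that the newly-created $1$ at position $r+1$ of $v'$ does not land inside the claimed zero-run. Items 1 and 2, by contrast, are essentially an unpacking of the definitions of $\bubble$, $\varphi$, and of the two kinds of children.
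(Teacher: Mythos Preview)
Your proof is correct. The paper itself offers no proof of this observation, labelling the three properties ``easy-to-see''; your write-up fills that gap. Items~1 and~2 are indeed direct from the definitions, exactly as you say, with the monotonicity of $\RightmostOne$ along edges doing all the work.

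For item~3 you correctly isolate the only non-trivial step---that $\varphi(v)=n+1$ implies $\varphi(\bubble(v))=n+1$---and prove it via Lemma~\ref{lemma:varphi is longest run of 0s}. The index bookkeeping is right: the key inequality $k+(n-r)\le r-1$ follows because $v_r=1$ cannot lie in the zero window, and with it the transfer of the tight-count and zero-run conditions to $k'=k+1$ in $v'=\bubble(v)$ goes through in all subcases, including $k=r-1$. One small remark: your derivation implicitly uses that when $r<n$ and $\varphi(v)=n+1$, the maximum $m$ in Lemma~\ref{lemma:varphi is longest run of 0s} is attained (i.e.\ the set is non-empty); this holds because otherwise $m=0$ would give $\varphi(v)=r+1\le n$.

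It is worth noting that the paper later proves a much stronger statement, equation~(\ref{bubbleflip-eq}) in the lemma on $\varphi(\bubble(w))$, whose proof begins with the observation $\varphi(w)\le\varphi(\bubble(w))$; this single inequality already yields your inductive step immediately. However, that lemma appears \emph{after} Observation~\ref{obs:tree_properties} in the paper, so your self-contained argument via Lemma~\ref{lemma:varphi is longest run of 0s} is the appropriate route at this point in the exposition.
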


The next lemma gives correctness, the generation order, and running time of algorithm {\sc Generate} $\PNW(w)$.

	\begin{lemma}\label{lemma:lex}
		For $w\in {\cal L}_n \setminus \{0^n,10^{n-1}\}$,  Algorithm~\ref{algo:pnw_descent} generates all prefix normal words in $\PNW(w)$ in
		lexicographic order in $O(n)$ time per word.
	\end{lemma}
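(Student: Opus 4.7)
The plan is to prove all three claims---correctness (enumeration of $\PNW(w)$ without repetition), lexicographic order, and $O(n)$ time per generated word---simultaneously, by induction on $|\PNW(w)|$, using Lemma~\ref{lemma:recursion} for the structural decomposition and Lemma~\ref{lemma:phicomputationlemma} for the bound on computing $\varphi$. The base case is $|\PNW(w)|=1$, which occurs exactly when $\RightmostOne(w)=n$ (so the $\bubble$-branch is skipped) and hence $\varphi(w)=n+1$ (so the $\flip$-branch is skipped as well); in this case the algorithm performs only $Visit()$, matching $\PNW(w)=\{w\}$. In the inductive step, Lemma~\ref{lemma:recursion} gives the disjoint decomposition $\PNW(w)=\{w\}\uplus\PNW(\bubble(w))\uplus\PNW(\flip(w,\varphi(w)))$, with empty contributions precisely when the guards in Algorithm~\ref{algo:pnw_descent} rule out the corresponding recursive call; applying the inductive hypothesis to each call then yields the enumeration claim.

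For the lex-order claim I would first check the two parent--child inequalities $\bubble(w)<_{\lex}w<_{\lex}\flip(w,\varphi(w))$ directly from the definitions: the first pair agrees on $w_1\cdots w_{r-1}$ and then differs ($0$ vs $1$) at position $r$, and the second pair agrees on $w_1\cdots w_{\varphi(w)-1}$ (since $w$ is zero after position $r$) and differs ($0$ vs $1$) at position $\varphi(w)$. I would then strengthen these to entire subtrees by unfolding Definition~\ref{def:descent}: using $\RightmostOne(\bubble(w))=r+1$, every $v\in\PNW(\bubble(w))$ begins with $w_1\cdots w_{r-1}0$, hence $v<_{\lex}w$; using $\RightmostOne(\flip(w,\varphi(w)))=\varphi(w)$, every $v\in\PNW(\flip(w,\varphi(w)))$ begins with $w_1\cdots w_r 0^{\varphi(w)-r-1}$ followed by a character at position $\varphi(w)$ that is forced to $1$ (because the free tail $\gamma$ in Definition~\ref{def:descent} contains the rightmost $1$ and begins at position $\varphi(w)$), hence $v>_{\lex}w$. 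Combining these bracketings with the inductive hypothesis applied to each recursive call and with the control flow of Algorithm~\ref{algo:pnw_descent} (bubble-subtree, then $w$, then flip-subtree) yields the global lex order.

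For the running time, I would observe that the recursion tree of the algorithm is exactly ${\cal T}(w)$, which has $|\PNW(w)|$ nodes, one per visited word. Each invocation does only $O(n)$ non-recursive work: locating $\RightmostOne(w)$, constructing $\bubble(w)$ and $\flip(w,\varphi(w))$ (each in $O(n)$, or $O(1)$ with an in-place representation), and invoking Algorithm~\ref{algo:phi}, whose cost is $O(n)$ by Lemma~\ref{lemma:phicomputationlemma}. Summing over the $|\PNW(w)|$ nodes of ${\cal T}(w)$ gives total time $O(n\cdot|\PNW(w)|)$, i.e.\ amortized $O(n)$ per generated word.

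The step I anticipate to be most delicate is the common-prefix argument that lifts a single parent--child lex comparison to all of that child's $\PNW$-descendants: one must pin down precisely which initial segment $w_1\cdots w_i$ is forced on every element of a child's $\PNW$ set, by tracing how the position of the child's rightmost $1$ constrains the free tail $\gamma$ in Definition~\ref{def:descent}. Once this shared prefix is identified, the comparison against $w$ collapses to a one-character difference at a known position, and the rest of the proof is routine.
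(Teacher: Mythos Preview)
Your overall plan matches the paper's: use Lemma~\ref{lemma:recursion} for correctness, compare the three blocks for lex order, and bound the per-node work via Lemma~\ref{lemma:phicomputationlemma}. Two points need attention, though.

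First, in the lex-order step your justification that for every $v\in\PNW(\flip(w,\varphi(w)))$ the character at position $\varphi(w)$ is ``forced to $1$'' is not correct. By Definition~\ref{def:descent}, such a $v$ has the form $w_1\cdots w_r\,0^{\varphi(w)-r-1}\gamma$ with $|\gamma|_1>0$, but $\gamma_1$ may well be $0$; for instance $\bubble(\flip(w,\varphi(w)))$ belongs to this set and has $v_{\varphi(w)}=0$. The conclusion $w<_{\lex}v$ is still true, for the reason the paper actually gives: $w$ has only $0$s from position $r{+}1$ onward, $v$ agrees with $w$ through position $\varphi(w){-}1$, and then $v$ has at least one $1$ somewhere in $\gamma$, so the first such $1$ decides the comparison. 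Your argument just needs this small repair.

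Second, for the running time you only establish the amortized bound: $O(n)$ work per node of ${\cal T}(w)$, hence $O(n\cdot|\PNW(w)|)$ in total. The paper proves the stronger statement that the delay between two consecutive \emph{Visit} calls is $O(n)$ in the worst case. It does this by analyzing the in-order traversal of ${\cal T}(w)$: from a node $v$ the next visited node is either the leftmost descendant of $v$'s right child (one $\varphi$-computation plus at most $n-\varphi(v)$ bubble steps), or $v$'s parent (one $O(1)$ undo), or the parent of the first left-child ancestor (at most depth-many $O(1)$ undo steps); each case costs $O(n)$. Since the lemma and the downstream Theorems~\ref{thm:algo} and~\ref{thm:gray-code} are stated as ``$O(n)$ time per word'' rather than ``amortized $O(n)$'', you should either add this worst-case delay analysis or explicitly weaken your claim to amortized.
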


	\begin{proof}
		Algorithm \ref{algo:pnw_descent} recursively generates first all words in
		$\PNW(\bubble(w)))$, then the word $w$, and finally the words in $\PNW(\flip(w, \varphi(w)))$. As we saw above (Lemma~\ref{lemma:recursion}), these sets form a partition of $\PNW(w)$, hence
		every word $v \in \PNW(w)$  is generated exactly once.
		Moreover, by definition of $\PNW$, for every $u \in \PNW(\bubble(w))$ it holds that  $u = w_1 \cdots w_{r-1}0\gamma$ with $|\gamma| = n-r $ and $ |\gamma|_1 >0$, thus it follows that $ u <_{\lex} w$. In addition,  for every $v \in \PNW(\flip(w,\varphi(w)))$ it holds that $v =  w_1 \cdots w_{r-1}1\beta \gamma $ where $ |\beta| = k = \varphi(w)-r-1 $, $  |\beta|_1 =0 ,$ $|\gamma| = n-r-k $ and $ |\gamma|_1 >0$, thus $ w <_{\lex} v$. Since these relations hold at every level of the recursion, it follows that the words are generated by Algorithm \ref{algo:pnw_descent} in lexicographic order.

For the running time, note that in each node $v$, the algorithm spends $O(n)$ time on the computation of $\phi(v)$ (Lemma~\ref{lemma:phicomputationlemma}), and if $v_n\neq 1$, another $O(1)$ time on computing $\bubble(v)$, and finally, if $\phi(v) \leq n$, further $O(1)$ time on computing $\flip(v,\phi(v))$. This gives a total running time of $O(n\cdot\PNW(w))$, so $O(n)$ amortized time per word. We now show that it actually runs in $O(n)$ time per word.

Notice that the algorithm performs an in-order traversal of the tree ${\cal T}(w)$. Given a node $v$, the next node visited by the algorithm is given by:
\[
next(v) = \begin{cases}
\text{leftmost descendant of right child,} & \text{if $\phi(v) \leq n$},\\
parent(v), & \hspace{-3.6cm} \text{if $\phi(v) > n$ and $v$ is a left child},\\
\text{parent of first left child on path from $v$ to root}, & \text{otherwise}.
\end{cases}
\]

In all three cases, the algorithm first computes $\phi(v)$, taking $O(n)$ time by Lemma~\ref{lemma:phicomputationlemma}. In the first case, it then descends down to the leftmost descendant of the right child, which takes $n-\phi(v)$ bubble operations, in $O(n)$ time. In the second case, the parent is reached by one operation (moving the last $1$ one position to the left if $v$ is a left child, and flipping the last $1$ if $v$ is a right child), taking $O(1)$ time. Finally, in the third case, we have up to depth of $v$ many steps of the latter kind, each taking constant time, so again in total $O(n)$ time. In all three cases, we get a total of $O(n)$ time before the next word is visited.
	 \end{proof}

\begin{figure}
\centering
\includegraphics[width=\textwidth]{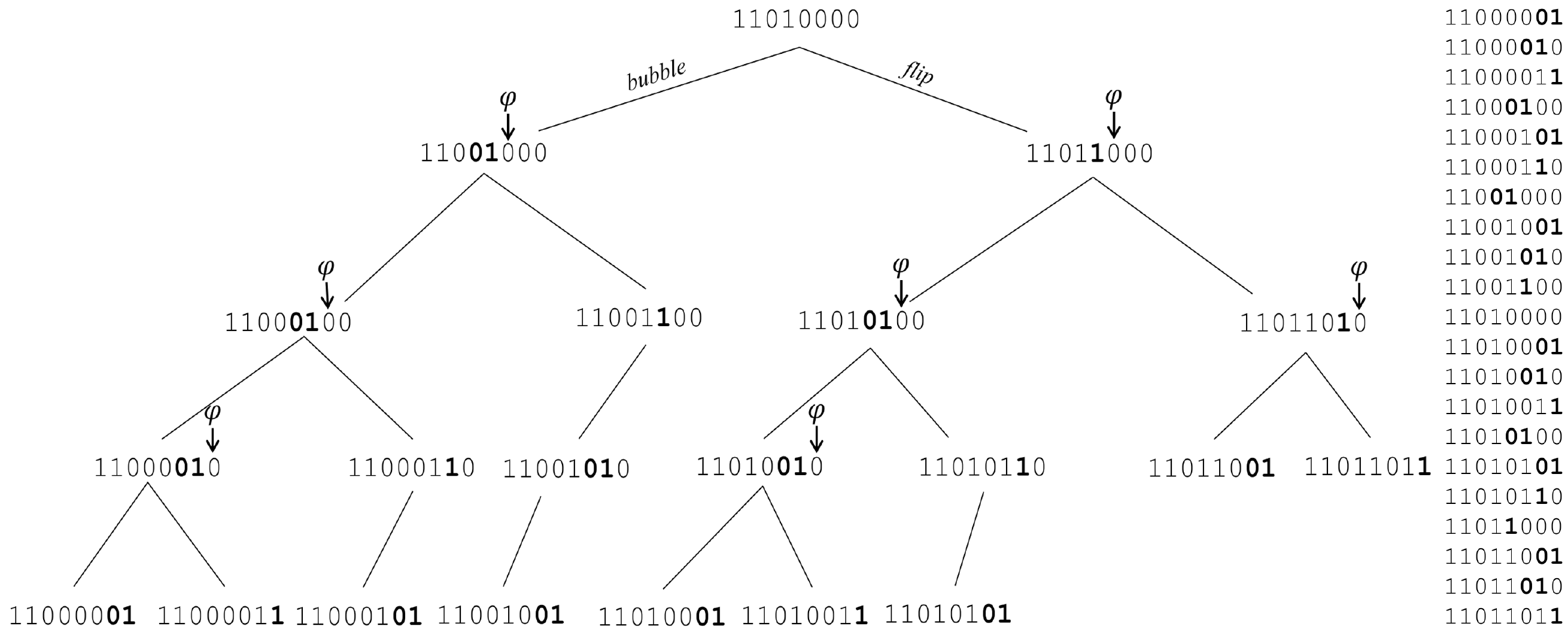}
  \caption{\label{fig:lexorder}The words in $\PNW(11010000)$ represented as a tree. If a node of the tree is word $w$, then its left child is $\bubble(w)$ and its right child is $\flip(w, \varphi(w)).$ In the tree, the position of $\varphi(w)$ is indicated, whenever $\phi(w)\leq n$; bubble operations (in the left child) resp.\ flip operations (in the right child) are highlighted in bold.  Algorithm \ref{algo:pnw_descent} generates these words by performing an in-order traversal of the tree. The corresponding list of words is given on the right.
}
\end{figure}

Now we are ready to present the full algorithm generating all prefix normal words of length $n$, see Algorithm \ref{algo:pnw_gen} ({\sc Bubble-Flip}). It first visits the two prefix normal words $0^n$ and $10^{n-1}$, and then generates recursively all words in ${\cal L}_n$ containing at least two $1$s, from the starting word $110^{n-2}$.

\IncMargin{1em}
\begin{algorithm}
	\DontPrintSemicolon
	{For a given $n$, generates all  prefix normal words of length $n$}\;
	\BlankLine

\nl	$w = 0^{n}$\;
\nl	$Visit()$\;
\nl	$w = 10^{n-1}$\;
\nl	$Visit()$\;
\nl	$w = 110^{n-2}$\;
\nl	{\sc Generate } $\PNW (w)$\;

\caption{{\sc Bubble-Flip}}\label{algo:pnw_gen}
\end{algorithm}
\DecMargin{1em}

\begin{theorem}\label{thm:algo}
The {\sc Bubble-Flip} algorithm generates all prefix normal words of length $n$, in lexicographic order, and in $O(n)$ time per word.
\end{theorem}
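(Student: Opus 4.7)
The plan is to reduce the theorem to the previously-established Lemma~\ref{lemma:lex} (correctness, lex order, and $O(n)$-per-word running time of {\sc Generate}~$\PNW(w)$) by showing that the set $\PNW(110^{n-2})$ is exactly ${\cal L}_n \setminus \{0^n, 10^{n-1}\}$ and that these two missing words are handled correctly at the start.

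\smallskip

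First, I would verify that $0^n$ and $10^{n-1}$ are both prefix normal (immediate from the definition) and are output before anything else. Next, the key structural claim is the identity $\PNW(110^{n-2}) = {\cal L}_n \setminus \{0^n, 10^{n-1}\}$. Using Definition~\ref{def:descent} with $w = 110^{n-2}$, we have $r = \RightmostOne(w) = 2$, so $\PNW(110^{n-2}) = \{v \in {\cal L}_n \mid v = 1\gamma, \, |\gamma|_1 > 0\}$. Combining this with Fact~\ref{basicfacts}(i) (every $w \in {\cal L}$ is either $0^n$ or starts with $1$) and observing that the only prefix normal word starting with $1$ and containing exactly one $1$ is $10^{n-1}$, the identity follows.

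\smallskip

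Once this set identity is established, correctness and the $O(n)$-per-word running time of the recursive part follow directly from Lemma~\ref{lemma:lex}. For the overall lexicographic order, I would note that $0^n <_{\lex} 10^{n-1} <_{\lex} v$ holds for every $v$ in $\PNW(110^{n-2})$, since each such $v$ begins with $11$, while $10^{n-1}$ begins with $10$; since Lemma~\ref{lemma:lex} guarantees the recursive generation emits the words of $\PNW(110^{n-2})$ in lex order, the full listing is lex-ordered. The two extra outputs at the start contribute only $O(n)$ time total, which does not affect the per-word bound.

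\smallskip

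The main (and essentially only) obstacle is the set identity $\PNW(110^{n-2}) = {\cal L}_n \setminus \{0^n, 10^{n-1}\}$; everything else is bookkeeping on top of Lemma~\ref{lemma:lex}. This identity is not entirely automatic, because one must check that every prefix normal word of length $n$ with at least two $1$s indeed matches the template ``starts with $1$, followed by some $\gamma$ of positive weight'' required by $\PNW(110^{n-2})$, which is exactly where Fact~\ref{basicfacts}(i) is invoked.
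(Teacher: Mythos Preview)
Your approach is exactly the paper's: establish $\PNW(110^{n-2}) = {\cal L}_n \setminus \{0^n,10^{n-1}\}$ via Fact~\ref{basicfacts}(i) and Definition~\ref{def:descent}, then invoke Lemma~\ref{lemma:lex} and check the two extra words fit in front lexicographically.

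There is one slip in your justification of the lex order. You write that every $v\in\PNW(110^{n-2})$ ``begins with $11$'', but this is false: e.g.\ $10100\in\PNW(11000)$ begins with $10$. The correct (and simpler) argument is the one implicit in your own set description: any $v\in\PNW(110^{n-2})$ has the form $v=1\gamma$ with $|\gamma|_1>0$, so $\gamma\neq 0^{n-1}$, hence $\gamma>_{\lex}0^{n-1}$ and $v>_{\lex}10^{n-1}$. With that fix, the proof is complete and identical in spirit to the paper's.
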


\begin{proof}
Recall that by Fact \ref{basicfacts}{\em (i)} every prefix normal word of length $n$, other than $0^n$, has $1$ as its first character. It is easy to see that there is only one prefix normal word of length $n$ with a single $1$, namely $10^{n-1}.$ Moreover, by Fact \ref{basicfacts}{\em (i)} and the definition of $\PNW$, the set of all prefix normal words of length $n$ with at least two $1$s coincides with $\PNW(110^{n-2})$. By Lemma~\ref{lemma:lex}, this set is generated by {\sc Generate $\PNW(110^{n-2})$} in lexicographic order and in $O(n)$ time per word. Noting that prepending $0^n$ and $10^{n-1}$ preserves the lexicographic order concludes the proof.  
\end{proof}

\subsection{Listing ${\cal L}_n$ as a combinatorial Gray code}

The algorithm {\sc Generate $\PNW(w)$} (Algorithm~\ref{algo:pnw_descent}) performs an in-order traversal of the nodes of the tree ${\cal T}(w)$. If instead we do a post-order traversal, we get a combinatorial Gray code of ${\cal L}_n$, as we will show next. First note that the change in the traversal order can be achieved by moving line 4 in Algorithm 2 to the end of the code, resulting in Algorithm~\ref{algo:pnw_descent2}.

\IncMargin{1em}
\begin{algorithm}
	\DontPrintSemicolon
	{Given a prefix normal word $w$ such that $|w|_1 > 1$, generate a combinatorial Gray code on $\PNW(w)$}\;
	\BlankLine

\nl	\If{$\RightmostOne(w) \neq n$}{
\nl		$w^{\prime} = \bubble(w)$\;
\nl		{\sc Generate2} $\PNW (w^{\prime})$\;
	}
\nl	$j = \varphi(w)$\;
\nl	\If{$j\leq n$}{
\nl		$w^{\prime\prime} = \flip(w,j)$\;
\nl		{\sc Generate2} $\PNW(w^{\prime\prime})$\;
	}
\nl	$Visit()$\;

	\caption{{\sc Generate2} $\PNW(w)$}\label{algo:pnw_descent2}
\end{algorithm}
\DecMargin{1em}

\begin{lemma}\label{lemma:postorder}
In a post-order traversal of ${\cal T}(w)$, two consecutive words have Hamming distance at most $3$.
\end{lemma}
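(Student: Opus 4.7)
The plan is to classify the consecutive pairs $(u,u')$ in a post-order traversal of ${\cal T}(w)$ according to where $u'$ sits relative to $u$, and to bound $d_H(u,u')$ in each class. By the recursive structure of post-order, exactly one of three situations arises: (A) $u$ is the right child of $u'$, so $u'$ is visited immediately after the post-order of its right subtree finishes; (B) $u$ is the left child of $u'$ and $u'$ has no right child; or (C) $u$ is the left child of some parent $v$ whose right child also exists, in which case $u'$ is the first node visited in the post-order of $v$'s right subtree.

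Cases (A) and (B) are immediate. In (A), $u=\flip(u',\phi(u'))$ differs from $u'$ in exactly one position by Definition~\ref{def:flip}, giving $d_H(u,u')=1$. In (B), $u=\bubble(u')$ differs from $u'$ in exactly two positions by Definition~\ref{def:bubble}, giving $d_H(u,u')=2$.

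The main work lies in case (C). The key observation is that the first node visited in the post-order of any subtree rooted at $x$ is obtained by iterating $\bubble$ starting from $x$ until the rightmost 1 reaches position $n$; call this the leftmost leaf of the subtree. This follows from the recursive definition of post-order together with Observation~\ref{obs:tree_properties}, which ensures that a right child exists only when a left child does, so the descent on the left never has to ``pop up'' into a right sibling. Applied to $v$'s right subtree, whose root is $\flip(v,\phi(v))$, the leftmost leaf is obtained by bubbling the newly-created 1 at position $\phi(v)$ all the way down to position $n$. Writing $r=\RightmostOne(v)$, this yields
\[
u = \bubble(v) = v_1\cdots v_{r-1}\,0\,1\,0^{n-r-1} \quad \text{and} \quad u' = v_1\cdots v_{r-1}\,1\,0^{n-r-1}\,1,
\]
so $u$ and $u'$ agree everywhere except possibly at positions $r$, $r+1$ and $n$, giving $d_H(u,u')\leq 3$ (with equality when $r+1<n$, and $d_H(u,u')=1$ when $r+1=n$).

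The main obstacle I expect is correctly identifying $u'$ in case (C); once the leftmost-leaf structure is pinned down via Observation~\ref{obs:tree_properties}, the remaining bitwise comparison is routine. Combining the three cases yields the bound claimed by the lemma.
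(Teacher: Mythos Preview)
Your proof is correct and follows essentially the same case split as the paper: (A) $u$ is a right child, (B) $u$ is a left child with no right sibling, (C) $u$ is a left child with a right sibling. The paper argues these cases more tersely, writing the parent as $u10^{k+1}$, the bubble-child as $u010^k$, and the leftmost descendant of the right sibling as $u10^k1$, without explicitly justifying the leftmost-leaf claim; your appeal to Observation~\ref{obs:tree_properties} to pin down that descent makes the argument slightly more explicit but is otherwise the same.
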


\begin{proof}
Let $v$ be some node visited during the traversal of ${\cal T}(w)$. If $v$ is a \flip-node, then the next node in the listing will be its parent node $v'$. Since $v = \flip(v',\phi(v'))$,  $v'$ is at Hamming distance $1$ from $v$. Otherwise $v$ is a \bubble-node, i.e.\ $v = u010^k$ and its parent is $u10^{k+1}$ for some word $u$ and integer $k$. If $v$ has no right sibling, then the next node visited is its parent, at Hamming distance $2$ from $v$. Else the next node $v'$ is the leftmost descendant of $v$'s right sibling, i.e.\ $v' = u10^k1$, and the Hamming distance to $v$ is at most $3$. 
\end{proof}

\begin{exa} The words in $\PNW(11010000)$ (Fig.~\ref{fig:lexorder}) are listed by Algorithm~\ref{algo:pnw_descent2} as follows: $11000001$, $11000011$, $11000010$, $11000101$, $11000110$, $11000100$, $11001001,$ $11001010$, $11001100$, $11001000$, $11010001$, $11010011$, $11010010$, $11010101$, $11010110,$ $11010100$, $11011001$, $11011011$, $11011010$, $11011000$, $11010000$.
\end{exa}

\begin{theorem}\label{thm:gray-code}
The {\sc Bubble-Flip} algorithm using a post-order traversal produces a cyclic combinatorial Gray code on ${\cal L}_n$, generating each word in time $O(n)$.
\end{theorem}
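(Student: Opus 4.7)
The plan is to combine the earlier correctness result with Lemma~\ref{lemma:postorder}, adding only a handful of explicit boundary checks. Correctness --- that every word of ${\cal L}_n$ is visited exactly once --- is order-independent: reshuffling lines 1--8 of Algorithm~\ref{algo:pnw_descent} into Algorithm~\ref{algo:pnw_descent2} does not change the multiset of recursive calls, so this part follows directly from Theorem~\ref{thm:algo} applied to the underlying set of visited words.

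For the Gray-code property I would split the output sequence into three stretches and bound the Hamming distance on each. The first two listed words, $0^n$ and $10^{n-1}$, are at Hamming distance~$1$. The third word is the one visited first by the post-order recursion on ${\cal T}(110^{n-2})$: by Observation~\ref{obs:tree_properties}(1), this is the leftmost descendant of the root, obtained by repeatedly applying \bubble{} to $110^{n-2}$ until the rightmost $1$ reaches position $n$, which yields $10^{n-2}1$ --- at Hamming distance~$1$ from $10^{n-1}$. Inside the traversal itself, consecutive pairs have Hamming distance at most~$3$ by Lemma~\ref{lemma:postorder}. Finally, since the root of a tree is visited last in a post-order, the last word emitted is $110^{n-2}$, and $d_H(110^{n-2},0^n)=2$, so the wrap-around preserves the bounded-distance property and produces a cyclic code.

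For the running time I would mirror the analysis given in the proof of Lemma~\ref{lemma:lex}. At each visited node the dominant cost is one call to Algorithm~\ref{algo:phi}, i.e.\ $O(n)$ by Lemma~\ref{lemma:phicomputationlemma}; the transition between two consecutive visited nodes corresponds to a single move in ${\cal T}(110^{n-2})$ --- either one edge up to a parent, or a descent along a chain of \bubble-edges down to the leftmost descendant of a right child --- and these movements amortize to $O(n)$ per word just as in the in-order case, because the tree is the same and the per-edge cost is unchanged. The two stand-alone \emph{Visit} calls at the start contribute only constant overhead.

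The subtlety I expect to have to check most carefully is the identification of the first and last words emitted by the post-order traversal of ${\cal T}(110^{n-2})$ (namely $10^{n-2}1$ and $110^{n-2}$), because on these depend the two boundary Hamming-distance computations. Both identifications follow by a short induction on $n-\RightmostOne(w)$ using Observation~\ref{obs:tree_properties}(1) and the definition of \bubble, so the main work is really bookkeeping rather than new combinatorics.
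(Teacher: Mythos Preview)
Your proposal is correct and follows essentially the same line as the paper's own proof: invoke Lemma~\ref{lemma:postorder} for the interior of the post-order traversal, verify the wrap-around distance $d_H(110^{n-2},0^n)=2$, and reuse the per-node cost analysis from Lemma~\ref{lemma:lex} for the $O(n)$ bound. Your treatment is in fact slightly more careful than the paper's on one point: you explicitly identify the first word emitted by the post-order traversal as $10^{n-2}1$ and check that $d_H(10^{n-1},10^{n-2}1)=1$, whereas the paper asserts the concatenation with $0^n,10^{n-1}$ yields a Gray code without spelling out this boundary. Conversely, the paper gives a somewhat more explicit case split for the worst-case (not merely amortized) $O(n)$ transition cost in post-order---distinguishing flip-nodes (parent in $O(1)$) from bubble-nodes (compute $\varphi$ at the parent, then possibly descend)---so you may want to replace the word ``amortize'' with a direct per-transition bound to match the statement exactly.
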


\begin{proof}
By Lemma~\ref{lemma:postorder}, {\sc Generate2 $\PNW(110^{n-2})$} produces a combinatorial Gray code. By  visiting the two words $0^n$ and $10^{n-1}$ first, followed by {\sc Generate2 $\PNW(110^{n-2})$}, we get a combinatorial Gray code on all of ${\cal L}_n$. The last word in this code is the root $110^{n-2}$ and $d_H(110^{n-2},0^n) = 2 \leq 3$, thus this code is also cyclic.

Since only the order of the tree traversal changed w.r.t.\ the previous algorithm, it follows immediately that the algorithm visits ${\cal L}_n$ in amortized $O(n)$ time per word, since the overall running time is, as before, $O(n|{\cal L}|)$.

To see that the time to visit the next word is $O(n)$, we distinguish two cases according to the type of node. If $v$ is a flip-node, then the next node is its parent, taking $O(1)$ time to reach. If $v$ is a bubble-node, then we have to check whether it has a right sibling by computing $\phi(v')$, where $v'$ is the parent of $v$, in $O(n)$ time. If $\phi(v') >n$, then the next node is $v'$. If $\phi(v') \leq n$, then we have to reach the leftmost descendant of $\flip(v',\phi(v'))$, passing along the way only bubble-nodes. This takes $n-\phi(v')$ time, so altogether $O(n)$ time for the node $v$.

\end{proof}

\subsection{Prefix normal words with given critical prefix}\label{sec:critical_prefix}

Recall Definition~\ref{defi:criticalprefix}. It was conjectured in~\cite{BFLRS_CPM14} that the average length of the critical prefix taken over all prefix normal words is $O(\log n)$. Using the {\sc Bubble-Flip} algorithm, we can generate all prefix normal words with a given critical prefix $u$, which could prove useful in proving or disproving this conjecture. Moreover, if we succeed in counting prefix normal words with critical prefix $u=1^s0^t$, then this could lead to an enumeration of $|{\cal L}_n|$, another open problem on prefix normal words~\cite{BFLRS_TCS17}.

In the following lemma, we present  a characterization of prefix normal words of length $n$ with the same critical prefix $1^s0^t$ in terms of our generation algorithm. For $s\geq 1,t\geq 0$, let us denote by $\CritSet(s,t,n)$ the set of all prefix normal words of length $n$ and critical prefix $1^s0^t$. Note that there is only one prefix normal word whose critical prefix has $s=0$, namely $0^n$.

\begin{lemma}\label{lemma:critset}
Fix $s \geq 1$ and $t \geq 0$, and let $u = 1^s0^t.$ Then,
\begin{align*}
\CritSet(s,t,n)
= \begin{cases}
	\{u\} & \text{ if } s+t = n,  \\
	\{v\} \cup \PNW(\flip(v, \phi(v)), & \text{ if } s+t<n,\\
	\end{cases}
\end{align*}

where $v=u10^{n-(s+t+1)}$.
\end{lemma}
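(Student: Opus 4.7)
The plan is to apply Lemma~\ref{lemma:recursion} to $v$ and then, inside the resulting disjoint decomposition of $\PNW(v)$, to separate the words by the value of their $(s+t+1)$-st letter: those with a $1$ will turn out to be exactly $\CritSet(s,t,n)$.

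First I would dispose of the trivial case $s+t=n$: then $u = 1^s 0^t$ is the only word of length $n$ whose prefix of length $n$ equals $1^s 0^t$, and it is prefix normal by Fact~\ref{basicfacts}{\em (iii)} applied to $1^s$. From here on I assume $s+t<n$. I would then verify that $v = 1^s 0^t 1 0^{n-s-t-1}$ is prefix normal. Since $v$ contains only two $1$s, the only factor that could violate the prefix-normal condition is $10^t 1$, of length $t+2$ and weight $2$; and $P_v(t+2)\geq 2$ follows from $s\geq 1$ (either because positions $1$ and $2$ are both $1$s when $s\geq 2$, or because position $t+2 = s+t+1$ is itself the second $1$ of $v$ when $s=1$). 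This also shows $v \in \CritSet(s,t,n)$ and $|v|_1 = s+1 \geq 2$, so $v\notin\{0^n,10^{n-1}\}$ and Lemma~\ref{lemma:recursion} is applicable to $v$.

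The key step is a characterisation of $\CritSet(s,t,n)$ as a subset of $\PNW(v)$. Writing $r = \RightmostOne(v) = s+t+1$, the definition of $\PNW$ yields
\[
\PNW(v) = \{\, w \in {\cal L}_n : w_1\cdots w_{s+t} = 1^s 0^t \text{ and } |w_{s+t+1}\cdots w_n|_1 \geq 1 \,\}.
\]
A length-$n$ word has critical prefix exactly $1^s 0^t$ if and only if it begins with $1^s 0^t$ and, since $s+t<n$, its $(s+t+1)$-st letter is $1$; hence $\CritSet(s,t,n) = \{\,w \in \PNW(v) : w_{s+t+1} = 1\,\}$. I would then invoke Lemma~\ref{lemma:recursion} to write
\[
\PNW(v) = \{v\} \cup \PNW(\flip(v,\varphi(v))) \cup \PNW(\bubble(v))
\]
as a pairwise disjoint union and inspect position $s+t+1$ in each part. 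The root $v$ has $v_{s+t+1}=1$; when $\varphi(v)\leq n$ the word $\flip(v,\varphi(v))$ differs from $v$ only at position $\varphi(v) > r$, so retains a $1$ at position $s+t+1$, and every element of $\PNW(\flip(v,\varphi(v)))$ agrees with it on positions $1,\dots,\varphi(v)-1$; whereas $\bubble(v) = 1^s 0^{t+1} 1 0^{n-s-t-2}$ carries a $0$ at position $s+t+1$, and every element of $\PNW(\bubble(v))$ agrees with $\bubble(v)$ on positions $1,\dots,s+t+1$. Matching the two descriptions of $\PNW(v)$ along this coordinate yields $\CritSet(s,t,n) = \{v\} \cup \PNW(\flip(v,\varphi(v)))$, and the convention of Definition~\ref{def:descent} handles the boundary cases $\varphi(v)>n$ or $\RightmostOne(v)=n$ automatically.

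I do not anticipate a serious obstacle: the lemma is essentially a corollary of Lemma~\ref{lemma:recursion}, the conceptual content being that the \bubble\ operation is precisely what pushes a word \emph{out} of a fixed critical-prefix class by extending the initial run of $0$s, whereas the root and the \flip-branch stay inside it. The only care needed is in the quick verification that $v$ is prefix normal and lies in $\CritSet(s,t,n)$, which is where the hypothesis $s\geq 1$ enters.
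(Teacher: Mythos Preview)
Your overall argument is correct and takes a somewhat different route from the paper. The paper works directly from definitions: it writes $\CritSet(s,t,n) = \{v\} \cup \{u1\gamma \in {\cal L}_n : |\gamma|_1 > 0\}$, then invokes the definition of $\varphi(v)$ to see that any such $\gamma$ must begin with at least $\varphi(v)-(s+t+2)$ zeros, which is exactly the defining condition for $\PNW(\flip(v,\varphi(v)))$. You instead apply Lemma~\ref{lemma:recursion} to decompose $\PNW(v)$ into three disjoint pieces and filter them by the value of the $(s+t+1)$-st coordinate, observing that the $\bubble$-branch is precisely what pushes a word out of the critical-prefix class while the root and the $\flip$-branch stay inside. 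Your route is a little longer but more structural: it makes explicit that $\CritSet(s,t,n)$ sits inside the generation tree ${\cal T}(v)$ as the root together with its right subtree, which is exactly the picture the paper draws in Figure~\ref{fig:critset_example} immediately after the lemma.

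One slip to fix: you write ``since $v$ contains only two $1$s'', but $v = 1^s0^t10^{n-s-t-1}$ has $s+1$ ones, so for $s\ge 2$ there are factors other than $10^t1$ with weight at least $2$ (e.g.\ $1^a0^t1$ for $2\le a\le s$). The conclusion that $v\in{\cal L}$ is still true and easy --- for instance, $1^s\in{\cal L}$ trivially, Fact~\ref{basicfacts}{\em (iii)} gives $1^s0^t\in{\cal L}$, a direct check via Fact~\ref{basicfacts}{\em (iv)} gives $1^s0^t1\in{\cal L}$, and Fact~\ref{basicfacts}{\em (iii)} again yields $v\in{\cal L}$ --- but the justification as written only covers the case $s=1$.
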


\begin{proof}
If $s+t = n$, then clearly $\CritSet(s,t, n) = \{u\}.$
Otherwise,
\begin{eqnarray*}
\CritSet(s,t,n) &=&
\{u10^{n-(s+t+1)}\} \cup \{u1\gamma \in {\cal L}_n  \mid  |\gamma|_1 >0\} \\
&=& \{v\} \cup \{u1\gamma \in {\cal L}_n  \mid \gamma_1,\ldots, \gamma_{\phi(v)-(s+t+2)}= 0, |\gamma|_1 >0\}\\
&=& \{v\} \cup \PNW(\flip(v, \varphi(v))),
\end{eqnarray*}
where the first equality holds by definition of critical prefix, the second by definition of $\phi(v)$,
and the third by definition of $\PNW$.
 \end{proof}

In Fig.~\ref{fig:critset_example}, we give a sketch of the placement of some of the sets with same critical prefix within ${\cal T}(110^{n-2})$, which, as the reader will recall, contains all prefix normal words of length $n$ except $0^n$ and $10^{n-1}$. The nodes in the tree are labelled with the corresponding generated word, and we have highlighted the subtrees corresponding to $\CritSet(1,1,n)$, $\CritSet(1,t,n)$, $\CritSet(s,1,n)$ and $\CritSet(s,t,n)$. Let us take a closer look at $\CritSet(s,t,n)$ for $s,t\geq 2$. The word $1^s0^t10^{n-(s+t+1)}$ is reached starting from the root $110^{n-2}$, traversing $s-1$ right branches (i.e.\ \flip-branches), passing through the word $1^s010^{n-(s+1)}$, and then traversing $t$ left branches (i.e.\ \bubble-branches). The set $\CritSet(s,t,n)$ is then equal to
the word $1^s0^t10^{n-(s+t+1)}$ together with its right subtree.

\begin{figure}
    \centering
    \includegraphics[width=\textwidth]{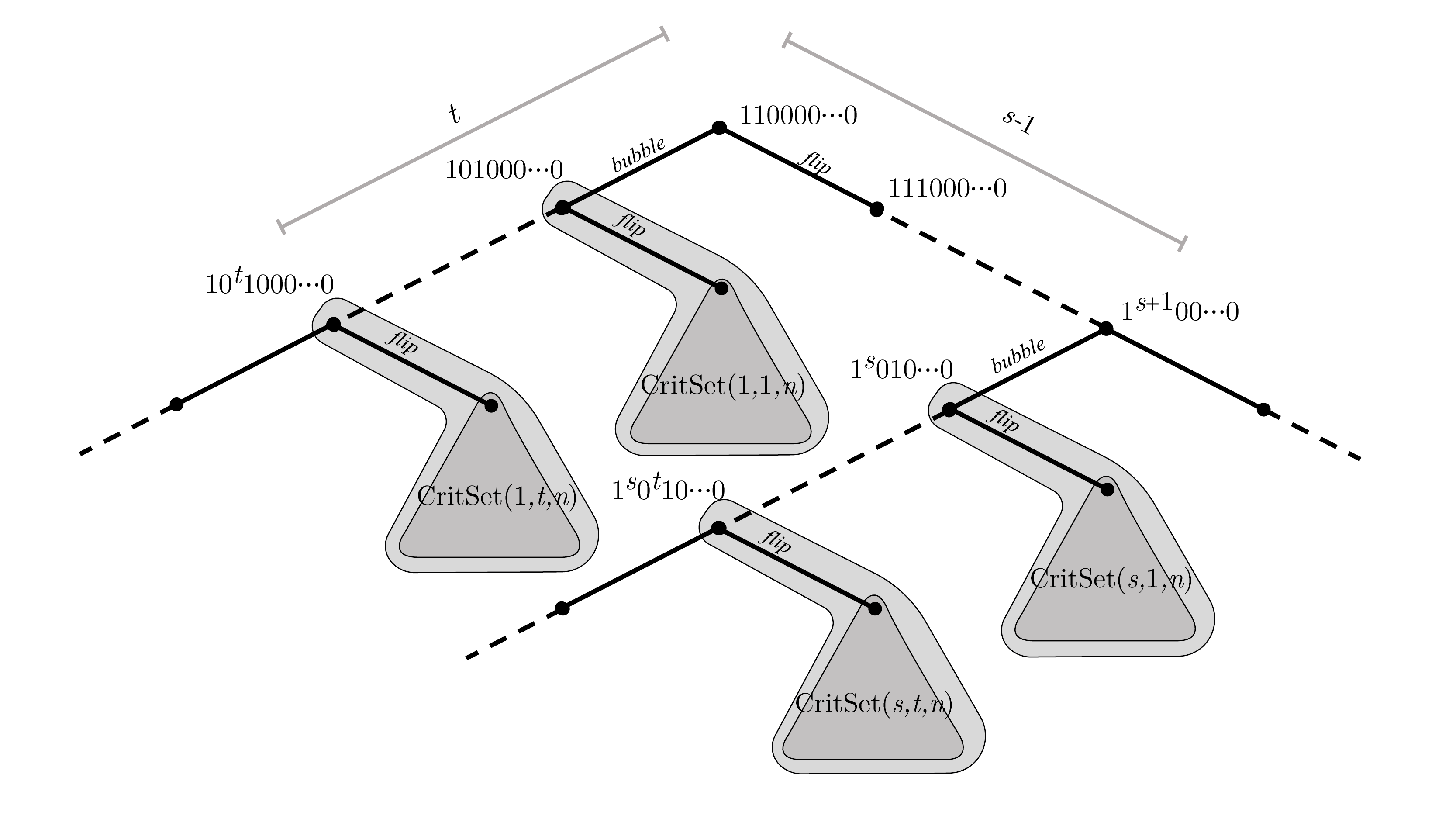}
      \caption{A sketch of the computation tree of Algorithm \ref{algo:pnw_descent} for the set $w=110^{n-2}$, highlighting the subtrees corresponding to sets of prefix normal words with the same critical prefix.  \label{fig:critset_example}
}
\end{figure}

Apart from revealing the recursive structure of sets of prefix normal words with the same critical prefix,
the {\sc Bubble-Flip} algorithm allows us to collect experimental data on the size of $\CritSet(s,t,n)$ for different values of $s,t,$ and $n$. We give some of these numbers, for $n=32$ and small values of $s$, see Table~\ref{tab:critsetsizes}. It was already known~\cite{BFLRS_CPM14} that, for $n\leq 50$, the average critical prefix length, taken over all $w\in {\cal L}_n$, is approximately $\log n$; with the new algorithm we are able to generate more precise data. In Fig.~\ref{fig:critprefixlength}, we plot the relative number of prefix normal words with a given critical prefix length, for lengths $n=16$ and $n=32$.

\begin{figure}
	\centering
	\includegraphics[width=\textwidth]{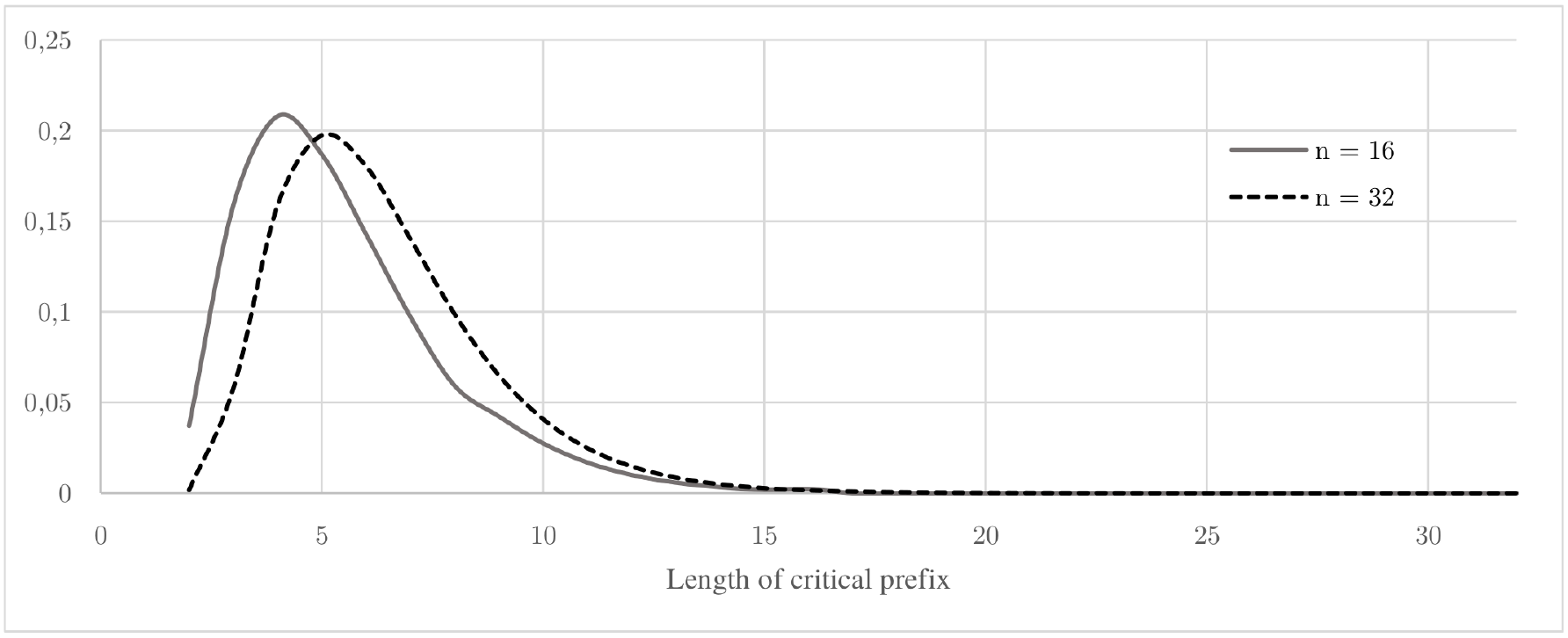}
	\caption{\label{fig:critprefixlength}%
	The frequency of prefix normal words with given critical prefix length, in percentage of the total number of prefix normal words of length $n$, for $n=16$ (solid) and $n=32$ (dashed). }
\end{figure}

\begin{table}[htbp]
	\centering
	\begin{tabularx}{\textwidth}{cr|>{\raggedleft\arraybackslash}p{0.7in}YYYYY}
		\toprule
		&       & \multicolumn{6}{c}{$t$} \\
		&       & 1     & 2     & 3     & 4     & 5     & 6      \\
		\midrule
		\multirow{7}[1]{*}{$s$}& \num{1} & \num{284663} & \num{14295} & \num{2226} & \num{597} & \num{220} & \num{100} \\
		& \num{2} & \num{9453217} & \num{979458} & \num{162336} & \num{38404} & \num{11679} & \num{4317} \\
		& \num{3} & \num{25025726} & \num{4907605} & \num{1103214} & \num{293913} & \num{91632} & \num{32459} \\
		& \num{4} & \num{27244624} & \num{7961078} & \num{2338632} & \num{732602} & \num{248717} & \num{91441} \\
		& \num{5} & \num{20423789} & \num{7521441} & \num{2677376} & \num{964483} & \num{360542} & \num{144460} \\
		& \num{6} & \num{12789981} & \num{5378726} & \num{2178190} & \num{874907} & \num{358717} & \num{151429} \\
		& \num{7} & \num{7270699} & \num{3301575} & \num{1454694} & \num{633310} & \num{276593} & \num{121726} \\
		\bottomrule
	\end{tabularx}%
\vspace{.05\linewidth}
	\begin{tabularx}{\textwidth}{cr|>{\raggedleft\arraybackslash}p{0.45in}>{\raggedleft\arraybackslash}p{0.43in}>{\raggedleft\arraybackslash}p{0.43in}>{\raggedleft\arraybackslash}p{0.35in}>{\raggedleft\arraybackslash}p{0.30in}YYYY}
		\toprule
		&       & \multicolumn{9}{c}{$t$} \\
		&       & 7 & 8 & 9 & 10 & 11 & 12 & 13 & 14 & 15   \\
		\midrule
		\multirow{7}[1]{*}{$s$}& \num{1} & \num{53} & \num{30} & \num{16} & \num{11} & \num{9} & \num{7} & \num{5} & \num{3} & \num{1} \\
		& \num{2} & \num{1788} & \num{813} & \num{451} & \num{276} & \num{161} & \num{90} & \num{47} & \num{16} & \num{15} \\
		& \num{3} & \num{12606} & \num{5815} & \num{2962} & \num{1475} & \num{723} & \num{346} & \num{121} & \num{106} & \num{92} \\
		& \num{4} & \num{37967} & \num{16994} & \num{7693} & \num{3507} & \num{1594} & \num{576} & \num{470} & \num{378} & \num{299} \\
		& \num{5} & \num{61139} & \num{26459} & \num{11658} & \num{5169} & \num{1941} & \num{1471} & \num{1093} & \num{794} & \num{562} \\
		& \num{6} & \num{65165} & \num{28543} & \num{12605} & \num{4944} & \num{3473} & \num{2380} & \num{1586} & \num{1024} & \num{638} \\
		& \num{7} & \num{54118} & \num{24188} & \num{9949} & \num{6476} & \num{4096} & \num{2510} & \num{1486} & \num{848} & \num{466} \\
		\bottomrule
	\end{tabularx}%
	\vspace{.05\linewidth}
	\begin{tabularx}{\textwidth}{cr|>{\raggedleft\arraybackslash}p{0.25in}>{\raggedleft\arraybackslash}p{0.25in}>{\raggedleft\arraybackslash}p{0.25in}>{\raggedleft\arraybackslash}p{0.15in}>{\raggedleft\arraybackslash}p{0.15in}>{\raggedleft\arraybackslash}p{0.15in}>{\raggedleft\arraybackslash}p{0.15in}>{\raggedleft\arraybackslash}p{0.15in}>{\raggedleft\arraybackslash}p{0.15in}YYYYYYYY}
		\toprule
		&       & \multicolumn{17}{c}{$t$} \\
		&       & 16 & 17 & 18 & 19 & 20 & 21 & 22 & 23 & 24 & 25 & 26 & 27 & 28 & 29 & 30 & 31 & 32   \\
		\midrule
		\multirow{7}[1]{*}{$s$} & \num{1} & \num{1} & \num{1} & \num{1} & \num{1} & \num{1} & \num{1} & \num{1} & \num{1} & \num{1} & \num{1} & \num{1} & \num{1} & \num{1} & \num{1} & \num{1} & \num{1} & \num{0} \\
		& \num{2} & \num{14} & \num{13} & \num{12} & \num{11} & \num{10} & \num{9} & \num{8} & \num{7} & \num{6} & \num{5} & \num{4} & \num{3} & \num{2} & \num{1} & \num{1} & \num{0} & \num{0} \\
		& \num{3} & \num{79} & \num{67} & \num{56} & \num{46} & \num{37} & \num{29} & \num{22} & \num{16} & \num{11} & \num{7} & \num{4} & \num{2} & \num{1} & \num{1} & \num{0} & \num{0} & \num{0} \\
		& \num{4} & \num{232} & \num{176} & \num{130} & \num{93} & \num{64} & \num{42} & \num{26} & \num{15} & \num{8} & \num{4} & \num{2} & \num{1} & \num{1} & \num{0} & \num{0} & \num{0} & \num{0} \\
		& \num{5} & \num{386} & \num{256} & \num{163} & \num{99} & \num{57} & \num{31} & \num{16} & \num{8} & \num{4} & \num{2} & \num{1} & \num{1} & \num{0} & \num{0} & \num{0} & \num{0} & \num{0} \\
		& \num{6} & \num{382} & \num{219} & \num{120} & \num{63} & \num{32} & \num{16} & \num{8} & \num{4} & \num{2} & \num{1} & \num{1} & \num{0} & \num{0} & \num{0} & \num{0} & \num{0} & \num{0} \\
		& \num{7} & \num{247} & \num{127} & \num{64} & \num{32} & \num{16} & \num{8} & \num{4} & \num{2} & \num{1} & \num{1} & \num{0} & \num{0} & \num{0} & \num{0} & \num{0} & \num{0} & \num{0} \\
		   \bottomrule
	\end{tabularx}%
	\caption{The size of $\CritSet(s,t,n)$ for $n=32$, $s=1,\ldots,7$ and $t = 1, \ldots, 32$\label{tab:critsetsizes}}%
\end{table}%

\subsection{Practical improvements of the algorithm}

The  running time of the algorithm is dominated by the time spent at each node  for computing the value of $\phi$, which, in general, takes
time linear in $n$, the length of the words. Therefore the overall generation of ${\cal L}_n$ takes $O(n |{\cal L}_n|)$ time.
One way of improving the running time of the overall generation would be to achieve faster amortized computation of $\phi$ by exploiting
the relationship between $\phi(w)$ and $\phi(w')$ for words $w$ and $w'$ generated at close nodes of the recursion tree. Next we
present two attempts in this direction. We show two cases where the $\phi(w)$ can be computed in sublinear time.
This implies a faster generation algorithm, absolutely, however,
since the number of nodes falling in such cases is only $o(|{\cal L}_n|)$
we do not achieve any significant asymptotic improvement on the overall generation.

\medskip

The first practical improvement can be obtained from the following lemma. It shows that given a node $w$ of the generation tree, for all nodes $w'$ in the subtree rooted in $w$, which are reachable from $w$ by traversing only  flip-branches,
the value $\phi(w')$ can be computed in time $O(\RightmostOne(w))$.
Note that on such a {\em rightward-path} words have a strictly increasing number of $1$s.
Therefore, the result of the lemma provides
a strict improvement on the original estimate that for each word $w'$ in such
rightward-path the computation of  $\phi(w')$ requires $\Theta(\RightmostOne(w')).$
This gives an improvement for nodes along the right branches of the tree only; the improvement gets better as we move further down a right path.

\begin{lemma}\label{lemma:lengthw-new}
Let $w \in {\cal L}_n$ and let
$$v^{(j)} =
\begin{cases}
w & j= 0\cr
\flip(v^{(j-1)}, \phi(v^{(j-1)})) & j > 0
\end{cases}
$$
i.e., $v^{(j)}$ is the word produced by applying $j$ times the $\flip$ operation starting from $w$.
For each $j \geq 0$ and $k \geq 1$, we have that $v = \flip(v^{(j)}, r(v^{(j)}) + k)$ is in ${\cal L}_n$  if and only if for all $t=1, \dots, r(w)$ it holds that
$|v_{r(v^{(j)})+k-t+1} \cdots v_{r(v^{(j)})+k}|_1$ $\leq |w_1 \cdots w_t|_1,$ i.e., the suffix of $v_1 \cdots v_{r(v^{(i)})+k}$ of length $t$
satisfies the prefix normal condition.
\end{lemma}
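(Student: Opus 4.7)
The plan is to characterize prefix normality of $v$ via Fact~\ref{basicfacts}(iv) and then reduce the relevant checks to suffixes of length at most $r(w)$. Set $m := r(v^{(j)}) + k$. Since $v$ differs from $v^{(j)}$ only at position $m$ (where $v_m = 1$ and $v^{(j)}_m = 0$), the prefix $v_1 \cdots v_{m-1}$ coincides with a prefix of $v^{(j)} \in {\cal L}_n$ and is therefore prefix normal, while $v_{m+1} \cdots v_n = 0^{n-m}$. Fact~\ref{basicfacts} parts~(ii) and~(iii) reduce $v \in {\cal L}_n$ to $v_1 \cdots v_m \in {\cal L}$, and Fact~\ref{basicfacts}(iv) further reduces this to verifying, for every $1 \leq t \leq m$, that the length-$t$ suffix $v_{m-t+1}\cdots v_m$ has at most $P_v(t)$ ones. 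For $t \leq r(w)$ we have $P_v(t) = P_w(t) = |w_1 \cdots w_t|_1$, so the displayed inequalities of the lemma are exactly the suffix conditions for $t \leq r(w)$; the forward direction is then immediate.

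For the converse, assume the stated inequalities and suppose, for contradiction, that $v \notin {\cal L}_n$. Applying Lemma~\ref{lemma:not pn} to $v^{(j)}$ with flip position $r(v^{(j)})+k$, there exists $1 \leq \ell < r(v^{(j)})$ with $|v^{(j)}_{r(v^{(j)})-\ell+1}\cdots v^{(j)}_{r(v^{(j)})}|_1 = P_{v^{(j)}}(\ell)$ and $|v^{(j)}_{\ell+1}\cdots v^{(j)}_{\ell+k}|_1 = 0$. A direct computation shows that the length-$(\ell+k)$ suffix of $v_1\cdots v_m$ has $1$-count equal to $P_{v^{(j)}}(\ell) + 1$, strictly exceeding $P_v(\ell+k) = P_{v^{(j)}}(\ell+k) = P_{v^{(j)}}(\ell)$ (the last equality using the zero-count condition). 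Hence the suffix condition fails at $t = \ell + k$, and the desired contradiction with the stated hypothesis will follow once we argue that $\ell + k \leq r(w)$.

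Showing $\ell + k \leq r(w)$ is the core of the argument, and the main obstacle. A first easy observation: since $w_{r(w)} = 1$ and $v^{(j)}$ agrees with $w$ on positions $\leq r(w)$, the assumption $\ell < r(w) \leq \ell + k$ would put the $1$ at position $r(w)$ inside the zero-range $[\ell+1, \ell+k]$, contradicting $|v^{(j)}_{\ell+1}\cdots v^{(j)}_{\ell+k}|_1 = 0$. Thus either $\ell + k < r(w)$ (done) or $\ell \geq r(w)$. The delicate case is $\ell \geq r(w)$: here $[\ell+1,\ell+k]$ lies strictly past $r(w)$ and must avoid all the intermediate rightmost-one positions $r(v^{(1)}),\ldots,r(v^{(j)})$ of $v^{(j)}$. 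I would handle this case by induction on $j$, exploiting that each $r(v^{(i)}) = \phi(v^{(i-1)})$ was chosen to be the minimal valid flip position, as characterized by Lemma~\ref{lemma:varphi is longest run of 0s}. This minimality constrains the gaps $r(v^{(i)}) - r(v^{(i-1)})$ tightly enough that a purported bad $\ell$ beyond $r(w)$ must descend to a corresponding bad index already within $w$, contradicting the assumed inequalities for $t \leq r(w)$ via the base case of the induction. The bookkeeping of this descent, combining the zero-run structure of $v^{(j)}$ past $r(w)$ with the minimal-flip property of the chain $w = v^{(0)}, v^{(1)}, \ldots, v^{(j)}$, is where I expect the proof to require the most care.
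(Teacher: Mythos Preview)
Your reduction via Fact~\ref{basicfacts} and the forward direction are correct, and in the converse your Cases~A and~B are handled cleanly. The gap is Case~C ($\ell \geq r(w)$): this is precisely where the content of the lemma lives, and you have not proved it. You only offer a plan (``induction on $j$'', ``descent via minimality of $\phi$'') without showing how the induction hypothesis for $v^{(j-1)}$ is to be applied to a witness $\ell$ produced by Lemma~\ref{lemma:not pn} from $v^{(j)}$. The index $\ell$ from Lemma~\ref{lemma:not pn} can genuinely land beyond $r(w)$ once $j\geq 1$, so a real argument is required; the ``bookkeeping of this descent'' you allude to is the whole proof, and it is missing.

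The paper organises the converse differently and thereby avoids your Case~C altogether. It does not invoke Lemma~\ref{lemma:not pn} to extract a witness and then try to push it below $r(w)$. Instead it picks a minimal counterexample $(j,k)$ (smallest $j$, then smallest $k$), sets $n_0 = r(v^{(j)})+k$, writes $S(t)$ for the number of $1$s in the $t$-suffix of $v_1\cdots v_{n_0}$, and chooses $m>r(w)$ minimal with $S(m)>P(m)$. From minimality one deduces $v_m=0$; since $m>r(w)$ and $v_m=0$, position $m$ was skipped by the chain of $\phi$-flips, which yields some $t'<m$ with $|v_{m-t'+1}\cdots v_m|_1 = P(t')$. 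Setting $m'=m-t'$, this gives $P(m)=P(m')+P(t')$, while
\[
S(m) \;=\; S(m') + |v_{n_0-m+1}\cdots v_{n_0-m'}|_1 \;\leq\; P(m') + P(t') \;=\; P(m),
\]
the two bounds coming from minimality of $m$ and from prefix normality of $v^{(j)}$ (together with minimality of $(j,k)$). This contradicts $S(m)>P(m)$. The point is that the contradiction arises from an \emph{additive splitting} of the minimal failing length $m$, not from relocating a bad index into $[1,r(w)]$; that is the idea your sketch does not yet contain.
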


\begin{proof}
Assume otherwise and let $j$ and $k$ be the smallest integers such that $v = \flip(v^{(j)}, r(v^{(j)}) + k)$ is a counterexample---we first choose the smallest
$j$ such that there is a $k$ and then among all such $k$'s we choose the smallest, given the choice of $j$.

Let  $n_0 = r(v^{(j)}) + k.$
We write $P(i)$ for $P_v(i)$, and denote by $S(i)$ the number of $1$s in the $i$-length suffix of $v_1 \cdots v_{n_0}$.
Let $r = r(w)$. By assumption, $S(t)\leq P(t)$ for all $t\leq r$, but there is an $m > r$ such that $S(m) > P(m)$.  Choose this $m$ minimal.
By definition, using the properties of the $\phi$ function, we have that $v^{(j)} \in {\cal L}_n$. Moreover, by the minimality
of the choice of $j$ and $k$,  it holds that  $v_{n_0-m+1} \cdots v_{n_0-1}$ satisfies the prefix normal condition, i.e.\
$|v_{n_0-m+1} \cdots v_{n_0-1}|_1 \leq P(m-1)$. Therefore, it must hold that $P(m-1) = P(m)$, hence $v_m=0$.
Since $m>r$ and $v_m=0$, there must be $0 \leq j' \leq  j$ such that $\phi(v^{(j')}) < n_0-m < \phi(v^{(j'+1)})$, i.e., the
flip operation that produces $v^{(j'+1)}$ has to be done on a position following $n_0-m$.
 This means that for some $t'<m$, $|v_{m-t'+1}\cdots v_m|_1 = P(t'),$ otherwise we would have $v_m=1$. Let $m' = m-t'$. Thus we have $P(m) = P(m')+P(t')$. On the other hand, $S(m) = S(m') + |v_{n_0-m+1} \cdots v_{n_0-m+t'}|_1 \leq P(m') + P(t') = P(m)$, where the inequality holds by the minimality of $m$ and of $n_0$, respectively. But this is a contradiction to our assumption that $S(m)>P(m)$.

\end{proof}

Second, we show how to derive $\phi(v')$ for a \bubble-node $v'$ from $\phi(v)$, where $v$ is the parent of $v'$. This gives an improvement (from linear to constant) for all nodes of the form $\bubble^*(v)$ of some node $v$, spreading out the cost of computing $\phi(v)$ for $v$ over all \bubble-descendants of $v$. Note that this covers the case of Observation~\ref{obs:tree_properties}, part 3, which tells us that we can skip the computation of $\phi(v)$ if the parent of $v$ does not have a \flip-child.

\begin{lemma}
	Let $w$ be a  prefix normal word $w$ of length $n$ with $|w|_1 \geq 2$ and $r = \RightmostOne(w) \neq n$. Then
\begin{equation}\label{bubbleflip-eq}
\varphi(\bubble(w)) = \begin{cases}
	\min\{n+1, \varphi(w)+2\} & \text{if }|w|_1 = 2 ,\\
	\varphi(w) & \text{if } |w_1 \cdots w_{\varphi(w)-r}|_1 > 1, \\
	\min\{n+1, \varphi(w)+1\} & \text{otherwise}. \\
	\end{cases}
\end{equation}

In particular, $\phi(\bubble(w))$ can be computed in constant time, given $\phi(w)$.
\end{lemma}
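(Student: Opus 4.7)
The plan is to invoke Lemma~\ref{lemma:varphi is longest run of 0s} twice, for $w$ and for $w' := \bubble(w)$. Writing $r' = \RightmostOne(w') = r+1$ and letting $m$, $m'$ denote the corresponding maxima defined in that lemma, we have $\varphi(w) = \min\{n+1, r+m+1\}$ and $\varphi(\bubble(w)) = \min\{n+1, r+m'+2\}$. Proving the three cases of~(\ref{bubbleflip-eq}) therefore reduces to establishing $m' = r-1$ in Case~1, $m' = m-1$ in Case~2, and $m' = m$ in Case~3; the $\min$ with $n+1$ handles the boundary $\varphi(w) = n+1$ automatically.

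First I would translate the matching condition of Lemma~\ref{lemma:varphi is longest run of 0s} for $w'$ into conditions on $w$. Since $\bubble$ only changes positions $r$ and $r+1$, the condition $|w'_{r'-k'+1}\cdots w'_{r'}|_1 = P_{w'}(k')$ holds trivially at $k' \in \{1, r\}$, and for $2 \leq k' \leq r-1$ is equivalent to the analogous condition for $w$ at $k = k'-1$ \emph{together with} $w_{k+1} = 0$; the extra conjunct comes from prefix normality of $w$, which bounds the suffix $1$-count by $P_w(k)$. In parallel, the relevant zero-run in $w'$ starting at $k'+1$ coincides with the corresponding run in $w$ unless it \emph{bridges} position $r$, i.e.\ unless $w_{k'+1} = \cdots = w_{r-1} = 0$, in which case the run in $w'$ is exactly one longer.

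The central technical step is the claim: if $|w|_1 \geq 3$ and $m$ is attained by some witness $(k, \ell_k)$ with $\ell_k = r-1-k$ (the bridging condition for that witness), then $|w_1\cdots w_{m+1}|_1 = 1$, i.e.\ Case~3 holds. To prove it, I would evaluate $|w_{r-k+1}\cdots w_r|_1$ using $w_{k+1} = \cdots = w_{r-1} = 0$ and $w_r = 1$, combine with the matching condition and with the identity $P_w(r) = P_w(k) + P_w(r-k)$, and split on whether $r \geq 2k$: the first subcase forces $|w|_1 = 2$, contradicting the hypothesis, while the second directly yields $P_w(r-k) = P_w(m+1) = 1$. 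This bridging dichotomy is the main obstacle.

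With the claim in hand, each of the three cases is short. Case~1 is a direct application of Lemma~\ref{lemma:varphi is longest run of 0s} to the explicit word $w = 10^{r-2}10^{n-r}$ and its bubble $10^{r-1}10^{n-r-1}$, which yield $m = r-2$ and $m' = r-1$. In Case~2, the claim rules out bridging at any witness of $m$, so every contribution to $m'$ is at most $m-1$ (for $k'=1$ from $p_2 \leq m+1$, where $p_2$ denotes the position of the second $1$ of $w$; for the other admissible $k'$ from $\ell_k - 1 \leq m-1$, or $\ell_k \leq m-1$ when bridging occurs at a smaller $\ell_k < m$); the witness attaining $m$ in $w$ provides the matching lower bound, giving $m' = m-1$ and hence $\varphi(\bubble(w)) = \varphi(w)$. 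In Case~3, the $k'=1$ contribution in $w'$ equals $p_2 - 2 = m$ (since $p_2 = m+2$ by maximality of~$m$), while every other admissible contribution is at most~$m$, giving $m' = m$ and $\varphi(\bubble(w)) = \varphi(w) + 1$. Constant-time computability of $\varphi(\bubble(w))$ from $\varphi(w)$ then follows immediately from the three-way case formula.
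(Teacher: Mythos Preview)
Your approach is correct and takes a genuinely different route from the paper's. The paper proves each case by direct verification: it writes down the candidate word $\flip(\bubble(w),j)$ for the claimed value of $j$, and checks prefix normality (or its failure) via Fact~\ref{basicfacts}{\em (iv)}, comparing suffix $1$-counts against prefix $1$-counts position by position. You instead invoke Lemma~\ref{lemma:varphi is longest run of 0s} twice and reduce everything to comparing the combinatorial parameters $m$ and $m'$. Your translation of the matching-plus-zero-run condition for $\bubble(w)$ back to $w$, together with the bridging phenomenon (the zero run in $w'$ gains one when it reaches position $r$, since $w'_r=0$), gives a structural explanation of why the three cases arise: contributions to $m'$ are generically one less than to $m$, except at bridging witnesses, and your central claim pins down that bridging at a maximum-attaining witness forces $P_w(m+1)=1$. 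This avoids the suffix-by-suffix bookkeeping in the paper's Cases~2 and~3, at the price of leaning more heavily on Lemma~\ref{lemma:varphi is longest run of 0s}.

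A few small points to tighten. First, the ``identity'' $P_w(r)=P_w(k)+P_w(r-k)$ is not general; it holds here because the matching condition gives $|w_{r-k+1}\cdots w_r|_1 = P_w(k)$, so that $P_w(r)=P_w(r-k)+|w_{r-k+1}\cdots w_r|_1=P_w(r-k)+P_w(k)$---worth stating that way. Second, in your Case~2 upper bound you should also dismiss $k'=r$, whose contribution is $0\le m-1$ (using $m\ge 1$, which follows from $P_w(m+1)>1$). Third, the constant-time claim is not quite immediate from the formula alone: deciding which case applies requires knowing $\RightmostOne(w)$ and the position $p_2$ of the second leftmost~$1$, and the paper explicitly notes that both can be maintained in $O(1)$ per node of the generation tree.
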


\begin{proof}
An immediate observation is that $\varphi(w) \leq \varphi(\bubble(w)).$ Therefore, if $\varphi(w) = n+1$ the claim holds trivially. \medskip

\noindent
{\em Case 1.} $|w|_1 = 2.$ Then we can write $w$ as $w = 10^{r-2}10^{n-r}$ and $\bubble(w)  = 10^{r-1}10^{n-r-1}$.
It is then easy to see that  we have  $\varphi(w) = \varphi(10^{r-2}10^{n-r}) = \min\{n+1, r+t+1\}$
and $\varphi(\bubble(w)) = \varphi(10^{r-1}10^{n-r-1}) = \min\{n+1, (r+1) + (t+1) + 1\} = \min\{n+1, \varphi(w) + 2\}$, as desired. Since $w \in {\cal L}_n,$ we have $w_1 = 1.$

\medskip

\noindent {\em Case 2.} $|w_1\cdots w_{\varphi(w) - r}|_1 > 1.$
First of all, let us observe that we have $|w|_1 > 2$. For otherwise, if $|w|_1 = 2$, the analysis of the previous case
implies that $\varphi(w) - r = r-1$ hence $|w_1\cdots w_{\varphi(w) - r}|_1 = 1,$ contradicting the standing hypothesis.
From $|w_1\cdots w_{\varphi(w) - r}|_1 > 1$, it follows that $\varphi(w) > r+1.$ Moreover,  we have $\varphi(w) < 2r,$
since $w_1 \cdots w_{r-1} 1 0^{r-2} 1 \in {\cal L}_n$ (by Fact \ref{basicfacts} {\em (iv)}).

Now let $w' = \flip(w, \varphi(w))$ and $w'' = \flip(\bubble(w), \varphi(w)),$ i.e.,
\begin{align*}
w' &= w_1 \cdots w_{r-1} 1 0^{\varphi(w)-r-1} 1 0^{n-\varphi(w)}, \\
w'' &= w_1 \cdots w_{r-1} 0 1 0^{\varphi(w)-r-2} 1 0^{n-\varphi(w)}.
\end{align*}

By the definition of $\varphi$ we have $w' \in {\cal L}_n$. Moreover,  it holds that $\bubble(w) = w_1 \cdots w_{r-1}01 0^{n-r-1} \in {\cal L}_n.$
For proving the claim, it is enough to show that  $w'' \in {\cal L}_n.$

Let $S_{w''}(i) = |w''_{\varphi(w)-i+1} \cdots w''_{\varphi(w)}|_1$ and
$S_{w'}(i) = |w'_{\varphi(w)-i+1} \cdots w'_{\varphi(w)}|_1.$
It is not hard to see that for each $i \not \in \{r,  \varphi(w) - r\}$,
it holds that  $S_{w''}(i) = S_{w'}(i) \leq P_{w'}(i) = P_{w''}(i),$ where the inequality follows from the prefix normality
of $w'$.
Moreover,
for $i = \varphi(w)-r$, we have $S_{w''}(\varphi(w)-r) = 2$ and since $\varphi(w) - r < r$, we also have
$P_{w''}(\varphi(w)-r) =  P_{w'}(\varphi(w)-r) = P_{w}(\varphi(w)-r) > 1$ (by the standing hypothesis).
Finally, for $i = r$, using again $\varphi(w)-r < r$, it follows that $S_{w''}(r) = S_{w'}(r) \leq P_{w'}(r) = P_{w''}(r).$
In conclusion, we have $S_{w'}(i) \leq P_{w'}(i)$ for each $i = 1, \dots, \varphi(w),$ hence,
by  Fact \ref{basicfacts} {\em (iv)},  the word $w_1\cdots w_{r-1} 0 1 0^{\varphi(w)-r-2} 1 \in {\cal L}$ and
by Fact \ref{basicfacts} {\em (iii)}, $w'' \in {\cal L}_n,$ which concludes the proof of this case.

\medskip

\noindent {\em Case 3.} $|w_1\cdots w_{\varphi(w) - r}|_1 = 1$ and $|w|_1 > 2$. Proceeding
as in the previous case, we have that
$S_{w'}(\varphi(w)-r) = 2 > P_{w}(\varphi(w)-r) =  P_{w'}(\varphi(w)-r),$ which implies that
$w' \not \in {\cal L}_n,$ hence $\varphi(\bubble(w)) \geq \varphi(w)+1.$
Let
$$w''' = w_1 \cdots w_{r-1} 0 1 0^{\varphi(w)-r-1} 1 0^{n-\varphi(w)-1} = \flip(\bubble(w), \varphi(w)+1).$$

It is enough to show that $w''' \in {\cal L}_n.$
Let us redefine $S_{w'''}(i) = |w'''_{\varphi(w)-i+2} \cdots w'''_{\varphi(w)+1}|_1$ and
$S_{w'}(i) = |w'_{\varphi(w)-i+1} \cdots w'_{\varphi(w)}|_1.$
It is not hard to see that for each $i \in \{1, \dots, \varphi(w)\}$,
it holds that  $S_{w'''}(i) \leq S_{w'}(i)$. Moreover,
for each $i \in \{1, \dots, \varphi(w)-1\} \setminus \{r\}$, we have $P_{w'}(i) = P_{w'''}(i).$ Thus,
for each $i \in \{1, \dots, \varphi(w)-1\} \setminus\{r\}$,
it holds that  $S_{w'''}(i) \leq S_{w'}(i) \leq P_{w'}(i) = P_{w'''}(i),$
where the second inequality follows from the prefix normality of $w'.$

For $i = \varphi(w)$, using $w'''_1 = 1 = w'''_{\varphi(w)+1}$
we have $S_{w'''}(\varphi(w))  =  |w|_1 = P_{w'''}(\varphi(w)).$

For $i=r$, we have $\varphi(w)+2- r \leq r+1$.
If $\varphi(w)+2- r = r+1,$ i.e., $\varphi(w) = 2r-1$ then
$S_{w'''}(r)  = 2 = |w'_{r} \cdots w'_{\varphi(w)}|_1 \leq P_{w'}(r).$ Since $P_{w'}(r) = |w|_1 \geq 3,$ we have
$P_{w'''}(r) = P_{w'}(r)-1 \geq 2 = S_{w'''}(r).$

If $\varphi(w)+2- r \leq r,$
then
\begin{eqnarray*}
P_{w'''}(r) - S_{w'''}(r)  &=& |w'''_1\cdots w'''_r|_1 - |w'''_{\varphi(w)+2- r} \dots w'''_{\varphi(w)+1}|_1 \\
&=&
 |w'''_1\cdots w'''_{\varphi(w)+1- r}|_1 - |w'''_{r+1} \cdots w'''_{\varphi(w)+1}|_1 \\
 &=& P_{w'''}(\varphi(w)+1- r) - S_{w'''}(\varphi(w)+1- r) \geq 0,
 \end{eqnarray*}
 where the middle equality follows by removing from the two words the common intersection, and the last inequality comes from the
 previous subcase, as $\varphi(w)+1- r \in \{1, \dots ,\varphi(w)-1\} \setminus \{r\}.$

In conclusion, we have $S_{w'''}(i) \leq P_{w'''}(i)$ for each $i = 1, \dots, \varphi(w)+1,$ hence,
by  Fact \ref{basicfacts} {\em (iv)}  the word $w_1\cdots w_{r-1} 0 1 0^{\varphi(w)-r-1} 1 \in {\cal L}$ and
by Fact \ref{basicfacts} {\em (iii)} $w''' \in {\cal L}_n,$ which concludes the proof of this case.
The proof of (\ref{bubbleflip-eq}) is complete.

We now argue that $\phi(\bubble(w))$ can be computed in constant time.
Our result says that knowing $\RightmostOne(w)$ and the position of the second leftmost $1$ in $w$, then $\phi(\bubble(w))$ can be
computed applying (\ref{bubbleflip-eq}), i.e., in $O(1)$ time. In fact, the condition
$|w_1 \cdots w_{\varphi(w)-r}|_1 > 1$ is equivalent to checking that the second leftmost $1$ of $w$ is in a position
not larger than  $\varphi(w)-\RightmostOne(w).$
It is not hard to see that $\RightmostOne(w)$
and the position of the second leftmost $1$ of
$w$ can be computed and maintained for each node on the generation tree
without increasing the computation by more than a constant
amount of time per node.

\end{proof}

We provide the following examples to illustrate the two improvements.

\begin{example}
For the first improvement, consider the word $w = 11001010^{n-7}$ with $n$ some large number. Let $w^{(1)}, w^{(2)}, \dots, w^{(i)}$ be the words generated on the right path rooted at $w$, i.e., $w^{(1)}$ is the flip-child of $w$,
$w^{(2)}$ is the flip-child of $w^{(1)}$ and so on.

It is not hard to see that
$w^{(1)} = 1100101010^{n-9}$, $w^{(2)} = 110010101010^{n-11},$ and in general
$w^{(i)} = 1100101(01)^i 0^{n-7-2i}$ for any $i = 1,2,\dots, \frac{n-7}2.$

What Lemma 9 guarantees is that,  for $i = 1, \dots, \frac{n-7}2$,  $w^{(i)}= \flip(w^{(i-1)}, \phi(w^{(i-1)}))$
can be computed in time $\Theta(r(w))$ rather than $\Theta(r(w^{(i-1)})).$
Therefore, in total, to generate them all,  we need
 $\Theta(r(w)\cdot n)$. Without applying Lemma 9, i.e., computing
 $w^{(i)} = \flip(w^{(i-1)}, \phi(w^{(i-1)}))$ using Algorithm 1, in time $r(w^{(i-1)}) = 7+2(i-1)$,
we would need in total  $\Theta(n^2)$ time.
\end{example}

\begin{example}
    For the second improvement,  consider the word $w = 100100000000$, for which it holds that $|w|_1 = 2$.
    We have that $\varphi(w) = 7$, and indeed, the word $\bubble(w) = 100010000000$ has $\varphi(\bubble(w)) = 9$.  As an example for a word $w$ with $|w_1 \cdots w_{\varphi(w)-r}|_1 > 1$, consider the word $w = 110001010000$. We have $\varphi(w) = 11$ and also for the word $\bubble(w) =  110001001000$, we have $\varphi(\bubble(w)) = 11$.
Finally, consider the word $w = 101001001000$. We have $\varphi(w) = 11$, and since $|10|_1 \leq 1$, it holds that $\bubble(w) = 101001000100$ and $\varphi(\bubble(w)) = 12$.
\end{example}

\section{On finite and infinite prefix normal words}\label{sec:theory}

In this section, we study infinite prefix normal words. We focus on infinite extensions
of finite prefix normal words which satisfy the prefix normal condition at every finite point and which are in a certain sense densest among all possible infinite extensions of the starting word. We show that words in
this class are ultimately periodic, and we are able to determine both the size and the density of the period and to upper bound the starting point of the periodic behaviour.

\subsection{Definitions}

An infinite binary word is a function $v: \IN \to \{0,1\}$ (where $\IN$ denotes the set of natural numbers not including $0$). The set of all infinite binary words is denoted $\{0,1\}^{\omega}$. As with finite words, we refer to the $i$'th character of $v$ by $v_i$, to the factor spanning positions $i$ through $j$ by $v_i\cdots v_j$, and to the prefix of length $i$ by $\pref_i(v)$. As before, $P(i) = P_v(i)$ denotes the number of $1$s in the prefix of length $i$. Given a finite word $u$, $u^{\omega}$ denotes the infinite word $uuu\cdots$.
An infinite word $v$ is called {\em ultimately periodic} if there exist two integers $p,i_0\geq 1$ such that $v_{i+p} = v_i$ for all $i\geq i_0$, or equivalently, if it can be written as $v=zu^{\omega}$ for some finite words $z,u$. The word $v$ is called {\em periodic} if it is ultimately periodic with $i_0=1$, or equivalently, if there exists a finite word $u$ such that $v = u^{\omega}$. If $v=zu^{\omega}$, then we refer to $u$ as a period of $v$.%

\begin{definition}[Minimum density, minimum density prefix] %
Let $w\in \{0,1\}^*\cup \{0,1\}^{\omega}$. Denote by $D(i) = D_w(i) = P_w(i)/i$, the {\em density} of the prefix of length $i$. Define the {\em minimum density of $w$} as $\delta(w) = \inf \{ D(i) \mid 1\leq i \}$. If this infimum is attained somewhere, then we also define
\[
\iota(w) = \min \{ j \mid \forall i: D(j) \leq D(i) \}, \quad \text{ and } \quad \kappa(w) = P_w(\iota(w)).
\]

We refer to $\pref_{\iota(w)}(w)$ as the {\em minimum-density prefix}, the shortest prefix with density $\delta(w)$.
Note that $\iota(w)$ is always defined for finite words, while for infinite words, a prefix which attains the infimum may or may not exist.
\end{definition}

\begin{exa}
For $w = 110100101001$ and $u = 110100101010$ we have $\delta(w)=5/11, \iota(w) = 11, \kappa(w)=5,$ and $\delta(u) = 1/2, \iota(u) = 6, \kappa(u)=3$. For the infinite words $v = (10)^{\omega}$ and $v' = 1(10)^{\omega}$, we have $\delta(v) = \delta(v') = 1/2$, and $\iota(v)=2, \kappa(v)=1$, while $\iota(v')$ is undefined, since no prefix attains density $1/2$.
\end{exa}

For a prefix normal word $u$, every factor of the infinite word $u0^{\omega}$ respects the prefix normal condition. This  leads to the definition of infinite prefix normal words.

\begin{definition}[Infinite prefix normal words]
An infinite binary word $v$ is called {\em prefix normal} if, for every factor $u$ of $v$, $|u|_1 \leq P_v(|u|)$.
\end{definition}

Clearly, as for finite words, it holds that an infinite word is prefix normal if and only if all its prefixes are prefix normal. Therefore, the existence of infinite prefix normal words can also be derived from K\"onig's Lemma (see~\cite{Lothaire3}), which states that the existence of an infinite prefix-closed set of finite words implies the existence of an infinite word which has all its prefixes in the set.

We now define an operation on finite prefix normal words which is similar to the $\flip$ operation from Sec.~\ref{sec:algorithm}: it takes a prefix normal word $w$ ending in a $1$ and {\em extends} it by a run of $0$s followed by a new $1$, in such a way that this $1$ is placed in the first possible position without violating prefix normality.

\begin{definition}[Operation $\flipext$]
Let $w\in {\cal L} \cap \{0,1\}^*1$. Define $\flipext(w)$ as the finite word $w0^k1$, where $k=\min \{ j \mid w0^j1 \in {\cal L}\}$. We further define the infinite word $v = \flipext^{\omega}(w) = \lim_{i\to \infty} \flipext^{(i)}(w)$.
\end{definition}

For a prefix normal word $w$, the word $w0^{|w|}1$ is always prefix normal, so the operation \flipext\ is well-defined.
Let $w \in {\cal L}$ and $r=\RightmostOne(w)<|w|$. Then $\flipext(\pref_{r}(w))$ is a prefix of $\flip(w, \varphi(w))$ if and only if $\phi(w) \leq |w|$, in particular, $\flip(w, \varphi(w)) = \flipext(\pref_{r}(w))\cdot 0^{|w|-\phi(w)}.$

\begin{definition}[Iota-factorization]
Let $w$ be a finite binary word, or an infinite binary word such that $\iota = \iota(w)$ exists. The iota-factorization of $w$ is the factorization of $w$ into $\iota$-length factors, i.e.\ the representation of $w$ in the form
\begin{align*}
w & =  u_1 u_2\cdots u_{r}v, \\
&\text{ where $r = \floor{|w|/\iota}$, $|u_i| = \iota$  for $i=1,\ldots,r$, and $|v| < \iota$}, &
 \text{for $w$ finite, and }\\
w & =  u_1 u_2 \cdots, \text{ where  $|u_i| = \iota$ for all $i$}, & \text{for $w$ infinite.}
\end{align*}

\end{definition}

\subsection{Flip extensions and ultimate periodicity}

\begin{lemma}~\label{lemma:iota-factorization}
Let $w$ be a finite or infinite prefix normal word, such that $\iota = \iota(w)$ exists. Let $w=u_1u_2\cdots$ be the iota-factorization of $w$. Then for all $i$, $|u_i|_1 = \kappa(w)$.
\end{lemma}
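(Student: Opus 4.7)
The proof is a short sandwiching argument: prefix normality gives an upper bound of $\kappa(w)$ on the number of $1$s in each length-$\iota$ factor, while the minimum-density property gives a matching lower bound when we sum over the first $i$ blocks. The conclusion follows by an easy induction.

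More concretely, set $\iota = \iota(w)$ and $\kappa = \kappa(w)$, so that $\delta(w) = \kappa/\iota$. I would first observe that, since each $u_i$ is a factor of $w$ of length $\iota$, prefix normality of $w$ yields
\[
|u_i|_1 \le P_w(\iota) = \kappa \qquad \text{for every } i.
\]
Second, since $\delta(w)$ is the infimum of the densities $D(j) = P_w(j)/j$, for every $i \ge 1$ (with $i\iota \le |w|$ in the finite case) we have
\[
P_w(i\iota) \ge i\iota \cdot \delta(w) = i\kappa.
\]
On the other hand $P_w(i\iota) = \sum_{j=1}^{i} |u_j|_1 \le i\kappa$ by the first inequality applied to each block. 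Therefore equality holds throughout, and in particular $P_w(i\iota) = i\kappa$ for all admissible $i$.

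Finally I would conclude by induction on $i$: the base case $i=1$ is just the definition $|u_1|_1 = P_w(\iota) = \kappa$. For the inductive step, assuming $|u_j|_1 = \kappa$ for all $j<i$, we get $P_w((i-1)\iota) = (i-1)\kappa$, so $|u_i|_1 = P_w(i\iota) - P_w((i-1)\iota) = i\kappa - (i-1)\kappa = \kappa$. In the finite case this applies to the $r$ full blocks of the factorization; the leftover $v$ with $|v|<\iota$ is not covered by the statement.

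I do not anticipate a real obstacle here: the only subtlety is to notice that $\iota$ existing is exactly what lets one write $\delta(w) = \kappa/\iota$ and turn the density inequality $P_w(i\iota) \ge i\iota\,\delta(w)$ into the clean integer bound $P_w(i\iota) \ge i\kappa$ needed to match the prefix-normal upper bound.
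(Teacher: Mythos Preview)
Your proposal is correct and takes essentially the same approach as the paper: both use prefix normality for the upper bound $|u_i|_1 \le \kappa$ and the minimum-density property for the matching lower bound. The paper phrases the second half as a contradiction (if some $|u_{i_0}|_1 < \kappa$ then the prefix $u_1\cdots u_{i_0}$ has density strictly below $\kappa/\iota$), whereas you give the equivalent direct sandwiching $i\kappa \le P_w(i\iota) \le i\kappa$ followed by a small induction; the content is identical.
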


\begin{proof}
Since $w$ is prefix normal, $|u_i| \leq \kappa = \kappa(w)$. On the other hand, assume there is an $i_0$ for which $|u_{i_0}|_1 < \kappa$. Then the prefix $u_1u_2\cdots u_{i_0}$ has fewer than $i_0\kappa$ many $1$s, and thus density less than $i_0\kappa / i_0 \iota = \kappa/\iota = D(\iota)$, in contradiction to the definition of $\iota$. 
\end{proof}

The next lemma states that the iota-factorization of a word $w$ constitutes a non-increasing sequence w.r.t.\ lexicographic order, as long as $w$ fulfils a weaker condition than prefix normality, namely that factors of length $\iota(w)$ obey the prefix normal condition. That this does not imply prefix normality can be seen on the example $(1110010)^{\omega}$, which is not prefix normal.

\begin{lemma} \label{lemma:lexicographic-order-of-the-factors}
Let $w$ be a finite or infinite binary word, such that $\iota = \iota(w)$ exists. Let $w=u_1u_2\cdots$ be the iota-factorization of $w$. If for every $i$, $|u_i|_1 = \kappa = \kappa(w)$, and every factor $u$ of length $\iota$ fulfils the prefix normal condition, then for all $i$, $u_i \geq_{\lex} u_{i+1}$.
\end{lemma}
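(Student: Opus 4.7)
The plan is to argue by contradiction. Suppose there is an index $i$ with $u_i <_{\lex} u_{i+1}$, and let $j$ be the smallest position on which these two $\iota$-length factors disagree, so that the $j$-th character of $u_i$ is $0$, the $j$-th character of $u_{i+1}$ is $1$, and they agree on positions $1,\ldots,j-1$. Let $c$ denote the common number of $1$s in their shared prefix of length $j-1$. A useful preliminary observation is that necessarily $j < \iota$: since $|u_i|_1 = |u_{i+1}|_1 = \kappa$, the same number of $1$s must appear in positions $j,\ldots,\iota$ of both words, so the case $j = \iota$ would force the $\iota$-th characters of $u_i$ and $u_{i+1}$ to coincide, contradicting the choice of $j$.

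The main step is then to consider the ``shift-by-one'' factor $f$ of $w$ consisting of the last $\iota - j$ characters of $u_i$ followed by the first $j$ characters of $u_{i+1}$. Since $u_i u_{i+1}$ occurs in $w$ by construction of the iota-factorization, $f$ is a factor of $w$ of length $\iota$. Counting the ones of $f$: the last $\iota - j$ characters of $u_i$ contribute $\kappa - c$ ones (using that the $j$-th character of $u_i$ is $0$, all the ``missing'' ones of $u_i$ lie in positions $j+1,\ldots,\iota$), while the first $j$ characters of $u_{i+1}$ contribute $c + 1$ ones (using that the $j$-th character of $u_{i+1}$ is $1$). Hence $|f|_1 = \kappa + 1 > \kappa = P_w(\iota)$, contradicting the hypothesis that every length-$\iota$ factor of $w$ fulfils the prefix normal condition.

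Overall the argument is a one-step pigeonhole: any strict lexicographic increase between consecutive iota-blocks opens up a length-$\iota$ window that carries one more $1$ than $\kappa$. The only delicate point is ruling out the boundary case $j = \iota$, which is handled by the counting observation in the first paragraph; beyond that I do not foresee further obstacles, and the argument applies uniformly to both the finite and infinite cases since only two consecutive blocks $u_i, u_{i+1}$ are involved.
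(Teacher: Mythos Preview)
Your proof is correct and uses essentially the same key idea as the paper: the overlapping $\iota$-length window consisting of the last $\iota-j$ characters of $u_i$ and the first $j$ characters of $u_{i+1}$ can carry at most $\kappa$ ones. The only stylistic difference is that the paper argues directly---showing $|\pref_j(u_i)|_1 \geq |\pref_j(u_{i+1})|_1$ for every $j$ and then reading off the lexicographic order---whereas you assume $u_i <_{\lex} u_{i+1}$ and extract a contradiction at the first differing position; your preliminary observation ruling out $j=\iota$ is harmless but in fact redundant, since the same count already forces $|u_{i+1}|_1=\kappa+1$ in that case.
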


\begin{proof} Let us write $u_i = u_{i,1} \cdots u_{i,\iota}$. Let $a(i,j) = |u_{i,1} \cdots u_{i,j}|_1$ denote the number of $1$s in the $j$-length prefix of $u_i$, and
$b(i,j) = |u_{i,j+1} \cdots u_{i,\iota}|_1$ the number of $1$s in the suffix of length $\iota-j$. By Lemma~\ref{lemma:iota-factorization}, we have that $a(i,j) + b(i,j) = \kappa$. On the other hand, $b(i,j) + a(i+1,j) \leq \kappa$, since all $\iota$-length factors satisfy the prefix normal condition. Thus, for all $i$: $a(i,j) \geq a(i+1,j)$.

If $u_i \neq u_{i+1}$, let $h = \min \{j\mid j = 1, \ldots, \iota : a(i,j)> a(i+1,j) \}$.
Thus, for every $j < h$, we have $ u_{i,j} = u_{i+1,j}$ and  $u_{i,h} =1, u_{i+1,h}=0$, implying $u_i \geq_{\lex} u_{i+1}$.  

\end{proof}

\begin{corollary}\label{coro:lexicographic-order-of-the-factors}
Let $w$ be a finite or infinite prefix normal word, such that $\iota = \iota(w)$ exists. Then for all $i$, $u_i \geq_{\lex} u_{i+1}$, where $u_i$ is the $i$'th factor in the iota-factorization of $w$.
\end{corollary}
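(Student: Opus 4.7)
The plan is to observe that this corollary follows almost immediately from combining the two preceding lemmas; the work has already been done, and only two short checks remain. Concretely, I would verify that the hypotheses of Lemma~\ref{lemma:lexicographic-order-of-the-factors} are satisfied whenever $w$ is prefix normal with $\iota(w)$ existing.

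First, I would invoke Lemma~\ref{lemma:iota-factorization} directly: since $w$ is prefix normal and $\iota = \iota(w)$ exists, every block $u_i$ in the iota-factorization satisfies $|u_i|_1 = \kappa(w)$. This gives the first hypothesis of Lemma~\ref{lemma:lexicographic-order-of-the-factors} for free.

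Second, I would note that prefix normality is a stronger condition than the per-factor requirement of Lemma~\ref{lemma:lexicographic-order-of-the-factors}. Indeed, by definition, if $w$ is prefix normal then every factor $u$ of $w$ satisfies $|u|_1 \leq P_w(|u|)$; in particular, every factor of length exactly $\iota$ satisfies the prefix normal condition. This is the second hypothesis of Lemma~\ref{lemma:lexicographic-order-of-the-factors}.

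With both hypotheses verified, Lemma~\ref{lemma:lexicographic-order-of-the-factors} immediately yields $u_i \geq_{\lex} u_{i+1}$ for all $i$. There is no real obstacle here: the only subtlety worth spelling out is the remark already illustrated in the paper via $(1110010)^{\omega}$, namely that the converse implication does not hold, which is why the intermediate lemma was stated under the weaker hypothesis. Thus the corollary is a clean specialization of Lemma~\ref{lemma:lexicographic-order-of-the-factors} to the prefix normal case, using Lemma~\ref{lemma:iota-factorization} to supply the constant-weight property automatically.
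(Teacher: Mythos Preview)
Your proposal is correct and matches the paper's intended approach exactly: the corollary is stated without proof in the paper because it is an immediate consequence of combining Lemma~\ref{lemma:iota-factorization} (which supplies the constant-weight hypothesis) with Lemma~\ref{lemma:lexicographic-order-of-the-factors} (whose second hypothesis is trivially implied by full prefix normality). There is nothing to add.
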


We now prove  that the \flipext\ operation leaves the minimum density invariant.  This means that among all infinite prefix normal extensions of a word $w\in {\cal L}$, the word $v=\flipext^{\omega}(w)$ has the highest minimum density.

\begin{lemma}\label{lemma:invariant-iota}
Let $w \in {\cal L}$ such that $w_n=1$, and let $v \in \flipext^*(w) \cup \{\flipext^{\omega}(w)\}$. Then $\delta(v) = \delta(w)$, and as a consequence, $\iota(v) = \iota(w)$ and $\kappa(v)=\kappa(w)$.
\end{lemma}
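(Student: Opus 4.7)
The plan is to prove the two inequalities $\delta(v) \leq \delta(w)$ and $\delta(v) \geq \delta(w)$ separately; the claims about $\iota$ and $\kappa$ will then drop out. The upper bound is immediate: every prefix of $w$ is a prefix of $v$ with the same number of ones, so $\delta(v)$, being the infimum of $D_v$ over all prefix lengths, is at most $\delta(w)$.

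For the lower bound I would induct on the number of applications of $\flipext$, reducing the whole question to the single-step claim: if $w' \in {\cal L}$ ends in $1$ and satisfies $\delta(w') = \kappa/\iota$ with $\iota = \iota(w'), \kappa = \kappa(w')$, then $\delta(\flipext(w')) \geq \kappa/\iota$. Writing $\flipext(w') = w' 0^{M} 1$ and setting $n' = |w'|$, $\kappa' = |w'|_1$, the only new prefixes to inspect have lengths $n'+1, \dots, n'+M+1$. A short direct calculation shows that their densities are all $\geq \kappa/\iota$ if and only if
\[
M \;\leq\; K_1 \;:=\; \frac{\iota\kappa' - \kappa n'}{\kappa}
\]
(the intermediate $M$ prefixes give this bound tightly; the final one, carrying $\kappa'+1$ ones, is automatic from $\iota \geq \kappa$ combined with $M \leq K_1$). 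So the whole matter reduces to proving $M \leq K_1$.

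To this end I would apply Lemma~\ref{lemma:varphi is longest run of 0s} to the padded word $w' 0^N$ for $N$ large. Since $w'_{n'} = 1$, we have $\RightmostOne(w' 0^N) = n'$; a short check shows that $M = \varphi(w' 0^N) - n' - 1$ equals the quantity $m$ produced by that lemma. Pick $k$ and $\ell = M$ attaining the maximum, and set $p = P_{w'}(k)$. The zero-run condition yields $P_{w'}(k+\ell) = p$, and the balance condition yields $P_{w'}(n'-k) = \kappa' - p$. Applying the density lower bound $P_{w'}(i) \geq i\kappa/\iota$ at $i = k+\ell$ and at $i = n'-k$ gives, respectively, $\ell \leq p\iota/\kappa - k$ and $p \leq \kappa' - (n'-k)\kappa/\iota$; substituting the second into the first,
\[
\ell \;\leq\; \frac{\iota}{\kappa}\!\left(\kappa' - (n'-k)\frac{\kappa}{\iota}\right) - k \;=\; \frac{\iota\kappa'}{\kappa} - n' \;=\; K_1,
\]
as required. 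The infinite case $v = \flipext^{\omega}(w)$ follows at once, since every prefix of $v$ is a prefix of some $\flipext^{(i)}(w)$, for which the density bound is already established.

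Finally, $\iota(v) = \iota(w)$: the prefix $\pref_\iota(w)$ is a prefix of $v$ attaining density $\delta(v) = \delta(w)$, so $\iota(v) \leq \iota$; conversely, every prefix of $v$ of length less than $\iota$ is a prefix of $w$ whose density is strictly greater than $\delta(w)$ by minimality of $\iota$, so $\iota(v) \geq \iota$. Then $\kappa(v) = \iota(v)\delta(v) = \kappa$. The main technical obstacle is the bound $M \leq K_1$; this hinges on simultaneously invoking the density hypothesis of $w'$ at the two positions $k+\ell$ and $n'-k$, which become linked through the balance condition characterizing $\varphi$.
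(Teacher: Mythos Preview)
Your proof is correct but takes a different route from the paper's. The paper argues by contradiction: taking a minimal index $i$ with $D_v(i) < \delta(w)$, it notes that $v_i = 0$, so position $i$ lies strictly inside some \flipext-block; the reason a $0$ was placed there is that some length-$m$ suffix ending at position $i$ already carries exactly $P_v(m)$ ones, and then the mediant inequality applied to the split $i = (i-m) + m$ (both parts satisfying the density bound by minimality of $i$) gives $D_v(i) \geq \delta(w)$, a contradiction. Your approach instead inducts on the number of \flipext\ steps and turns the single-step claim into the explicit inequality $M \leq K_1$, extracted from the $(k,\ell)$ characterization of $\varphi$ in Lemma~\ref{lemma:varphi is longest run of 0s} by evaluating the density hypothesis at the two linked positions $k+\ell$ and $n'-k$. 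The paper's argument is shorter, self-contained, and never invokes Lemma~\ref{lemma:varphi is longest run of 0s}; yours is more computational but has the merit of giving quantitative control on how far the next $1$ can land. One small point to patch: when $M = 0$ the maximum in Lemma~\ref{lemma:varphi is longest run of 0s} is over the empty set and there is no $(k,\ell)$ to pick, but then $M = 0 \leq K_1$ follows directly from $D_{w'}(n') \geq \kappa/\iota$.
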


\begin{proof}
Assume otherwise. Then there exists a minimal index $i$ such that $D_v(i) < \delta(w)=:\delta$. Clearly, $i> |w|$, by definition of $\delta$. Since $i$ is minimal, it follows that $D_v(i-1)\geq \delta$, which implies  $v_i=0$. Since $i> |w|$, there was an iteration of \flipext, say the $j$'th iteration, which produced the extension containing position $i$, i.e.\ $|\flipext^{(j-1)}(w)| < i < |\flipext^{(j)}(w)|$. Since $v_i=0$, this implies that there is an $m$ such that the factor $v_{i-m+1}\cdots v_{i-1}1$ would have violated the prefix normal condition, i.e.\ $|v_{i-m+1}\cdots v_{i-1}1|_1 > P_v(m)$. This implies $|v_{i-m+1}\cdots v_{i-1}0|_1 = P_v(m)$ (because $v$ is prefix normal). Now consider the prefix $\pref_i(v) = v_1\cdots v_i$, and let us write $i = i' + m$.  Since $i$ was chosen minimal, we have that $D_v(i'), D_v(m)\geq \delta$. Since $D_v(i') =\frac{P_v(i')}{i'}, D_v(m) = \frac{P_v(m)}{m}$, this implies
\[ D_v(i) = \frac{P_v(i)}{i} = \frac{P_v(i') + P_v(m)}{i' + m} \geq \delta,
\]
in contradiction to the assumption. 
\end{proof}

\begin{theorem}\label{thm:ult-periodic}
Let $w\in{\cal L}$ and $v=\flipext^{\omega}(w)$. Then $v$ is ultimately periodic. In particular, $v$ can be written as $v = ux^{\omega}$, where $|x| = \iota(w)$ and $|x|_1 = \kappa(w).$
\end{theorem}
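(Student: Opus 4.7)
The plan is to exploit the iota-factorization of $v$ together with the lexicographic monotonicity of its blocks, and then observe that a non-increasing sequence in a finite ordered set must eventually stabilize.

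First, I would invoke Lemma~\ref{lemma:invariant-iota} to conclude that $\iota(v) = \iota(w) =: \iota$ and $\kappa(v) = \kappa(w) =: \kappa$. A small but important detail to nail down here is that the infimum defining $\delta(v)$ is actually attained: since $v$ extends $w$ we have $D_v(\iota(w)) = D_w(\iota(w)) = \delta(w) = \delta(v)$, so the prefix of length $\iota(w)$ witnesses the infimum, and a brief comparison for smaller indices shows $\iota(v) = \iota(w)$. This makes the iota-factorization $v = u_1 u_2 u_3 \cdots$ well-defined, each $u_i$ of length $\iota$.

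Next, I would feed this factorization into the two results already proved. Lemma~\ref{lemma:iota-factorization} tells us that $|u_i|_1 = \kappa$ for every $i$, so the sequence $(u_i)_{i \geq 1}$ takes values in the finite set of binary words of length $\iota$ with exactly $\kappa$ ones (of size $\binom{\iota}{\kappa}$). Corollary~\ref{coro:lexicographic-order-of-the-factors} then gives $u_1 \geq_{\lex} u_2 \geq_{\lex} u_3 \geq_{\lex} \cdots$. A non-increasing sequence in a finite totally ordered set must eventually become constant, so there exists some $i_0$ with $u_i = u_{i_0}$ for all $i \geq i_0$.

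Setting $x := u_{i_0}$ and $u := u_1 u_2 \cdots u_{i_0 - 1}$, we get $v = u\, x^{\omega}$ with $|x| = \iota = \iota(w)$ and $|x|_1 = \kappa = \kappa(w)$, which is precisely the desired conclusion. I do not foresee a serious obstacle here: the heavy lifting was done in Lemmas~\ref{lemma:iota-factorization} and~\ref{lemma:invariant-iota} and in Corollary~\ref{coro:lexicographic-order-of-the-factors}. The only subtlety worth spelling out in the write-up is the justification that $\iota(v)$ exists for the infinite word $v$, which is what legitimizes applying the iota-factorization machinery to $v$ in the first place.
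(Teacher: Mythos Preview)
Your proposal is correct and follows essentially the same route as the paper's proof: invoke Lemma~\ref{lemma:invariant-iota} to get $\iota(v)=\iota(w)$, use Lemma~\ref{lemma:iota-factorization} and Corollary~\ref{coro:lexicographic-order-of-the-factors} to obtain a lexicographically non-increasing sequence of $\iota$-blocks, and conclude by finiteness that the sequence stabilizes. Your explicit remark that $\iota(v)$ is well-defined (because the infimum is attained at $\iota(w)$) is a detail the paper leaves implicit, and your restriction of the blocks to the set of size $\binom{\iota}{\kappa}$ is slightly sharper than the paper's appeal to all binary words of length $\iota$.
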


\begin{proof}
By Lemma~\ref{lemma:invariant-iota}, $\iota(v) = \iota(w)$, and by Lemma~\ref{lemma:iota-factorization}, in the iota-factorization of $w$, all factors $u_i$ have $\kappa(w)$ $1$s. Moreover,
 by Corollary~\ref{coro:lexicographic-order-of-the-factors}, the factors $u_i$ constitute a lexicographically non-increasing sequence. Since all $u_i$ have the same length $\iota(w)$, and there are finitely many binary words of length $\iota(w)$, the claim follows.

\end{proof}

We can further show that the period $x$ from the previous theorem is prefix normal, as long as it starts in a position which is congruent 1 modulo $\iota$, in other words, if it is one of the factors in the iota-factorization of $v$.

\begin{lemma}\label{lemma:xpn} Let $w\in{\cal L}$ and $v=\flipext^{\omega}(w) = ux^{\omega}$ such that $x$ is the $k'th$ factor in the iota-factorization of $v$, for some $k\geq 1$. Then $x$ is prefix normal.
\end{lemma}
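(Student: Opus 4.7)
My plan is to reformulate prefix normality as a subadditivity statement and then to derive subadditivity of $P_x$ on $[0,\iota]$ from the subadditivity of $P_v$, by placing both arguments of the latter deep inside the periodic tail $x^\omega$.

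First, for any (finite or infinite) binary word $z$, prefix normality is equivalent to the inequality $P_z(m_1+m_2)\leq P_z(m_1)+P_z(m_2)$ holding for all valid $m_1,m_2\geq 0$, since the number of ones in the factor $z_{m_1+1}\cdots z_{m_1+m_2}$ is $P_z(m_1+m_2)-P_z(m_1)$. Applied to $z=x$, it suffices to establish $P_x(a+j)\leq P_x(a)+P_x(j)$ for all $a,j\geq 0$ with $a+j\leq\iota$, since every factor of $x$ has the form $x_{a+1}\cdots x_{a+j}$ for some such $a,j$. Applied to $z=v$---which is prefix normal, being a limit of prefix normal words---it hands us the subadditivity of $P_v$ for free.

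Next, I would put $P_v$ in closed form on the periodic tail. By Lemma~\ref{lemma:iota-factorization} every factor of the iota-factorization has exactly $\kappa=\kappa(w)$ ones, so $P_v(i\iota)=i\kappa$ for all $i\geq 0$. Since $u_i=x$ whenever $i\geq k$, for any $i\geq k-1$ and $0\leq a\leq\iota$ the length-$(i\iota+a)$ prefix of $v$ is $u_1\cdots u_i$ followed by $x_1\cdots x_a$, giving
\[
P_v(i\iota+a) \;=\; i\kappa + P_x(a).
\]

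Finally, I would fix $a,j$ with $a+j\leq\iota$ and pick any $i_1,i_2\geq k-1$. Setting $m_1=i_1\iota+a$ and $m_2=i_2\iota+j$, one has $m_1+m_2=(i_1+i_2)\iota+(a+j)$, which is again of the above form because $a+j\leq\iota$. Substituting the three closed-form values into $P_v(m_1+m_2)\leq P_v(m_1)+P_v(m_2)$ and cancelling the common $(i_1+i_2)\kappa$ yields the desired $P_x(a+j)\leq P_x(a)+P_x(j)$, hence the prefix normality of $x$.

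The main conceptual step---and the only place where cleverness is needed---is realizing that both arguments of the subadditivity inequality for $P_v$ can be pushed deep into the periodic tail, so that the contribution of the nonperiodic prefix $u_1\cdots u_{k-1}$ is the same on both sides and cancels out. This is also precisely where the hypothesis that $x$ starts at an iota-factor boundary is essential: without it, the closed form $P_v(i\iota+a)=i\kappa+P_x(a)$ would fail.
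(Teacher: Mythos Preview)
Your argument is correct. The subadditivity reformulation is sound, the closed form $P_v(i\iota+a)=i\kappa+P_x(a)$ for $i\geq k-1$ follows exactly as you say from Lemma~\ref{lemma:iota-factorization} and the hypothesis that $u_i=x$ for $i\geq k$, and the cancellation goes through (including the boundary case $a+j=\iota$, where both sides pick up an extra $\kappa$).

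The paper takes a different route. It argues by contradiction: assuming $x$ is not prefix normal, it picks a minimal-length factor $\alpha$ of $x$ with more $1$s than the equally long prefix $\beta$, observes that minimality forces $\alpha$ and $\beta$ to be disjoint so that $x=\beta\gamma\alpha\nu$, and then builds an explicit factor $s=\gamma\alpha(x')^t$ of $v$ (with $x'$ a rotation of $x$ and $t$ the number of iota-blocks before $x$) whose $1$-count exceeds $P_v(|s|)$, contradicting the prefix normality of $v$. Your approach sidesteps this explicit construction entirely: by pushing both arguments of the subadditivity inequality for $P_v$ into the periodic tail, the pre-periodic contribution becomes identical on both sides and simply cancels. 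This is cleaner and more algebraic; the paper's argument, by contrast, exhibits a concrete witness to the failure of prefix normality of $v$, which is perhaps more tangible but requires the rotation trick and the minimality-implies-disjointness step.
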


\begin{proof} First note that if $v = x^{\omega}$, then $x$ is prefix normal by the prefix normality of $v$.
Else, assume for a contradiction that $x$ is not prefix normal. Let $\alpha$ be a factor of $x$ of minimal length s.t.\ $|\alpha|_1 > |\beta|_1$, where $\beta$ is the prefix of $x$ of length $|\alpha|$. Then $\beta$ and $\alpha$ are disjoint due to the minimality assumption. In other words, there is a (possibly empty) word $\gamma$ s.t.\ $\beta\gamma\alpha$ is a prefix of $x$.

Since $x$ is a $\iota$-factor of $v$, therefore the prefix of $v$ before $x$ has length $t\iota$ for some $t\geq 1$. Let $x = \beta\gamma\alpha\nu$, and write $x'$ for the rotation $\nu\beta\gamma\alpha$ of $x$. Now consider the word $s = \gamma\alpha (x')^t$, which has length $|\gamma| + |\alpha| + t\iota$. By Theorem~\ref{thm:ult-periodic}, $|x|_1=\kappa$, and since $x'$ is a rotation of $x$, also $|x'|_1=\kappa$. Therefore, for the factor $s$ of $v$ it holds that $|s|_1 = |\gamma|_1 + |\alpha| _1+ t\kappa > |\gamma|_1 + |\beta| _1+ t\kappa = P_v(|s|)$, in contradiction to $v\in {\cal L}$. 
\end{proof}

Next we show that for a word $v\in \flipext^*(w)$, in order to check the prefix normality of an extension of $v$,
it  is enough to verify that the suffixes up to length $|w|$ satisfy the prefix normal condition.

\begin{lemma}\label{lemma:lengthw}
Let $w$ be prefix normal and $v'\in \flipext^*(w)$. Then for all $k\geq 0$ and $v=v'0^k1$, $v\in{\cal L}$ if and only if for all $1\leq j \leq |w|$, the suffixes of $v$ of length $j$ satisfy the prefix normal condition.
\end{lemma}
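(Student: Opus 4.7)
The ``only if'' direction is immediate, since any factor of a prefix normal word, and in particular any suffix, satisfies the prefix normal condition.

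For the ``if'' direction I would argue by contradiction, mirroring the structure of the proof of Lemma~\ref{lemma:lengthw-new}. Assume every suffix of $v = v' 0^k 1$ of length at most $|w|$ satisfies PN, but $v \notin {\cal L}$. Let $j^{*}$ be the smallest suffix length at which $v$ violates PN; by assumption $j^{*} > |w|$. Writing $P(i) = P_v(i)$ and $S(i)$ for the number of $1$s in the length-$i$ suffix of $v$, a minimality argument identical to the one in the proof of Lemma~\ref{lemma:lengthw-new} forces $v_{n - j^{*} + 1} = 1$ (with $n = |v|$), $v_{j^{*}} = 0$, and $S(j^{*} - 1) = P(j^{*} - 1)$. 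Setting $\ell = j^{*} - k - 1$ and unfolding $v = v' 0^k 1$, this is equivalent to asserting the existence of an index $\ell$ in $v'$ such that (A) the length-$\ell$ suffix of $v'$ has the same number of $1$s as the length-$\ell$ prefix of $v'$, and (B) positions $\ell + 1, \ldots, \ell + k + 1$ of $v'$ are all $0$.

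The next step is a case analysis on the location of this $(k+1)$-run of zeros in $v' = w \cdot 0^{k_1} 1 \cdot 0^{k_2} 1 \cdots 0^{k_m} 1$. Since $w_{|w|} = 1$, the run cannot straddle the boundary of $w$: either (i) it lies entirely inside $w$, in which case $j^{*} = \ell + k + 1 \leq |w| - 1 < |w|$, contradicting $j^{*} > |w|$; or (ii) it lies entirely inside some block $0^{k_t}$, forcing $k_t \geq k + 1$, so that by the minimality of $k_t$ in the definition of \flipext, the word $v^{(t-1)} 0^k 1$ fails to be prefix normal, where $v^{(t-1)} = \flipext^{(t-1)}(w)$.

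The remaining task---and what I expect to be the main obstacle---is to derive a contradiction from case (ii). I plan to proceed by induction on the flipext depth $m$ of $v'$, handling the degenerate case $m = 0$ (i.e., $v' = w$) as a base case in which only (i) can arise. For the inductive step, applying the statement of the lemma to $v^{(t-1)}$ (whose flipext depth is strictly less than $m$) together with $v^{(t-1)} 0^k 1 \notin {\cal L}$ yields a suffix of $v^{(t-1)} 0^k 1$ of length at most $|w|$ that violates PN. Translating this into a violating suffix of $v$ of length at most $|w|$---which would contradict the standing hypothesis and thus close the proof---is the delicate step: I plan to exploit that $v$ and $v^{(t-1)} 0^k 1$ share the tail $1 0^k 1$ and have identical prefix $1$-counts on positions up to $|v^{(t-1)}|$, then apply a decomposition of the violating suffix into two pieces of lengths $t'$ and $j^{*} - t'$ in the spirit of the computation $S(m) \leq S(m') + P(t') = P(m)$ at the end of the proof of Lemma~\ref{lemma:lengthw-new}.
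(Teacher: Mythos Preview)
The paper's proof is a single sentence: ``Directly from Lemma~\ref{lemma:lengthw-new}.'' The reason is that $\flipext$ is nothing but $\flip$ viewed on a zero-padded word: for $N$ large enough, set $\tilde w = w\,0^{N-|w|}\in{\cal L}_N$; then the iterates $\tilde v^{(j)}$ of Lemma~\ref{lemma:lengthw-new} satisfy $\tilde v^{(j)} = \flipext^{(j)}(w)\cdot 0^{N-|\flipext^{(j)}(w)|}$, and $v = v'0^k1$ is precisely $\flip(\tilde v^{(m)}, r(\tilde v^{(m)})+k+1)$ truncated at the new~$1$. Since $w$ must end in $1$ for $\flipext(w)$ to be defined, $r(\tilde w)=|w|$, and the criterion of Lemma~\ref{lemma:lengthw-new} becomes exactly ``every suffix of $v$ of length $\le |w|$ satisfies the prefix normal condition''. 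No new argument is required.

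Your proposal instead reruns the \emph{proof} of Lemma~\ref{lemma:lengthw-new} in the $\flipext$ setting, wrapped in an induction on the $\flipext$-depth of $v'$. This is far heavier than necessary, and the step you yourself flag as the main obstacle is genuinely problematic as stated: from a PN-violating suffix of $v^{(t-1)}0^k1$ of length $j\le |w|$ you cannot directly read off one for $v=v'0^k1$, because for $j>k+1$ the two length-$j$ suffixes reach into \emph{different} words ($v^{(t-1)}$ versus $v'$) with no a~priori comparison of their $1$-counts. What would actually close this gap is the decomposition $S(m)\le S(m')+P(t')=P(m')+P(t')=P(m)$ from the end of the proof of Lemma~\ref{lemma:lengthw-new}, applied with the \emph{joint} minimality of $j$ and $k$ there---but that is exactly the argument already carried out in that lemma, so the induction on depth adds nothing. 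Just invoke Lemma~\ref{lemma:lengthw-new} via the padding correspondence.
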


\begin{proof}
Directly from Lemma \ref{lemma:lengthw-new}.
\end{proof}

By Theorem~\ref{thm:ult-periodic}, we know that $v=\flipext^{\omega}(w)$ has the form $v=ux^{\omega}$ for some $x$, whose length and density we can infer from $w$. The next theorem gives an upper bound on the waiting time for $x$, both in terms of the length of the non-periodic prefix $u$, and in the number of times a factor can occur before we can be sure that we have essentially found the periodic factor $x$ (up to rotation).

\begin{theorem}\label{thm:upperbounds}
Let $w\in {\cal L}$ and $v= \flipext^{\omega}(w)$. Let us write $v=ux^{\omega}$, with $|x|=\iota(w)$ and $x$ not a suffix of $u$. Let $\iota = \iota(w)$, $\kappa=\kappa(w),$ and $m = \left\lceil\frac{|w|}{\iota} \right\rceil$. Then

\begin{enumerate}
\item $|u| \leq ({\iota \choose \kappa} - 1)m\iota$, and
\item if for some $y\in \{0,1\}^{\iota}$, it holds that $y^{m+1}$ occurs with starting position $j>|w|$, then $y$ is a rotation of $x$.
\end{enumerate}
\end{theorem}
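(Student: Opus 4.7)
My plan is to prove part (2) first and then use the same underlying mechanism for part (1). The common engine is Lemma~\ref{lemma:lengthw}, which says that beyond position $|w|$ the next $\flipext$ step depends only on the last $|w|$ characters. So the ``state'' of the computation is the length-$|w|$ suffix, and once that state recurs, $v$ must enter a periodic cycle.

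For part (2), assume $y^{m+1}$ occurs at some position $j > |w|$. I would examine the length-$|w|$ suffixes of $v$ ending at positions $j + m\iota - 1$ and $j + (m+1)\iota - 1$, i.e., at the ends of the $m$-th and $(m+1)$-th copies of $y$ inside this factor. Because $|w| \leq m\iota$, each of these windows lies entirely inside the $\iota$-periodic block $y^{m+1}$, and since they are related by a shift of exactly $\iota$ positions they must coincide. Both positions exceed $|w|$, so the $\flipext$ continuations from the two states are forced to agree, which makes $v$ become $\iota$-periodic from position $j + m\iota$ onwards with period equal to $v_{j+m\iota}\cdots v_{j+(m+1)\iota-1} = y$. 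Thus $v = v_1 \cdots v_{j+m\iota-1}\,y^\omega$; comparing this tail with $v = u x^\omega$ and using $|y| = |x| = \iota$ shows that $y$ is a rotation of $x$.

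For part (1), consider the iota-factorization $v = u_1 u_2 \cdots$. By Lemma~\ref{lemma:iota-factorization} every factor has exactly $\kappa$ ones, and by Corollary~\ref{coro:lexicographic-order-of-the-factors} the sequence is lexicographically non-increasing; hence each of the $\binom{\iota}{\kappa}$ length-$\iota$ words with $\kappa$ ones occurs in at most one contiguous run, so there are at most $\binom{\iota}{\kappa}$ runs in total, the last being the infinite periodic tail. The bound reduces to showing that every finite run has length at most $m$. A run $u_s = u_{s+1} = \cdots = u_{s+m} = y$ of length $\geq m+1$ would produce the factor $y^{m+1}$ at the iota-aligned position $(s-1)\iota + 1$; applying the same state-recurrence argument as in part~(2) to the positions $(s+m-1)\iota$ and $(s+m)\iota$ (both inside the run and both $\geq m\iota \geq |w|$, hence equal length-$|w|$ suffixes by $\iota$-periodicity of $y^{m+1}$) forces the run to continue indefinitely, contradicting finiteness. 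Summing over the at most $\binom{\iota}{\kappa} - 1$ finite runs of length at most $m$ each then yields $|u| \leq (\binom{\iota}{\kappa} - 1)\,m\,\iota$.

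The main technical subtlety is in making the ``state recurrence implies determinism'' step airtight, since Lemma~\ref{lemma:lengthw} speaks about words in $\flipext^*(w)$ (which end in $1$), whereas the positions $j + m\iota - 1$ and $(s+m-1)\iota$ need not be positions of a $1$. I would handle this by observing that the length-$|w|$ suffix at such a position implicitly encodes both the position of the most recent $1$ placed by $\flipext$ and the offset to that $1$; agreement of the two suffixes together with matching offsets forces the two $\flipext$-trajectories to coincide step by step. With this observation in place the two arguments are straightforward counting.
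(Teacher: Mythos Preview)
Your proposal is correct and follows essentially the same route as the paper: Lemma~\ref{lemma:lengthw} supplies the ``last $|w|$ characters determine the continuation'' mechanism for part~(2), and part~(1) is then a counting argument combining Lemma~\ref{lemma:iota-factorization}, Corollary~\ref{coro:lexicographic-order-of-the-factors}, and the run-length bound from part~(2). Your version is in fact more careful than the paper's on two points: you flag the issue that Lemma~\ref{lemma:lengthw} is stated for words in $\flipext^*(w)$ (ending in a $1$) rather than for arbitrary prefixes of $v$, and in part~(1) you redo the state-recurrence argument at iota-aligned positions rather than invoking part~(2) directly, since for early runs the starting position need not satisfy $j>|w|$; the paper glosses over both.
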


\begin{proof}
{\em 1.:} Assuming {\em 2.}, then every $\iota$-length factor $y$ which is not the final period can occur at most $m$ times consecutively. By Cor.~\ref{coro:lexicographic-order-of-the-factors}, consecutive non-equal factors in the iota-factorization of $v$ are lexicographically decreasing, so no factor $y$ can reoccur again once it has been replaced by another factor. By Theorem~\ref{thm:ult-periodic}, the density of each factor is $\kappa$. There are at most ${\iota \choose \kappa}$ such $y$ which are lexicographically smaller than $\pref_{\iota}(w)$, and each of these has length $\iota$.

{\em 2.:} By Lemma~\ref{lemma:lengthw}, in order to produce the next character of $v$, the operation \flipext\ needs to access only the last $|w|$ many characters of the current word. After $m+1$ repetitions of $u$, it holds that the $|w|$-length factor ending at position $i$ is equal to the $|w|$-length factor at position $i-\iota$, which proves the claim. 
\end{proof}

The following lemma motivates our interest in infinite words of the form $\flipext^{\omega}(w).$
 It says that $\flipext^{\omega}(w)$ is the prefix normal word
with the maximum number of $1$'s in each prefix among all prefix normal words having $w$ as prefix.

\begin{lemma}\label{lemma:densest}
Let $w\in {\cal L}$, $v= \flipext^{\omega}(w)$, and let $z \in {\cal L}$ such that $pref_{|w|}(z) = w$. Then for every $i = 1, 2, \dots$, we have $P_v(i) \geq P_z(i).$
\end{lemma}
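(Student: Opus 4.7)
The plan is to prove the claim by strong induction on $i$. The base case $i \leq |w|$ holds with equality, since $\pref_i(v) = \pref_i(z) = \pref_i(w)$. For the inductive step with $i > |w|$, I first dispose of the easy subcases. If $v_i = 1$, then $P_v(i) = P_v(i-1) + 1 \geq P_z(i-1) + 1 \geq P_z(i)$; if $z_i = 0$, then $P_z(i) = P_z(i-1) \leq P_v(i-1) \leq P_v(i)$; and if $P_v(i-1) > P_z(i-1)$ strictly, then $P_v(i) \geq P_v(i-1) \geq P_z(i-1) + 1 \geq P_z(i)$. Hence it suffices to derive a contradiction in the single remaining subcase: $v_i = 0$, $z_i = 1$, and $P_v(i-1) = P_z(i-1)$.

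For the hard subcase, I will exploit the greedy choice of $\flipext$. Since $v$ contains infinitely many $1$'s, let $p > i$ be the smallest position with $v_p = 1$; then the $\flipext$-iteration that wrote this $1$ must have picked $p$ minimally while preserving prefix normality. In particular, $v_1\cdots v_{i-1} 1$ fails prefix normality. Applying Fact \ref{basicfacts} (iv) to $v_1 \cdots v_{i-1}$ yields some $\ell$ with $2 \leq \ell \leq i-1$ such that $P_v(\ell) \leq |v_{i-\ell+1}\cdots v_{i-1}|_1$. Combined with the prefix normality of $v$ (which bounds $|v_{i-\ell+1}\cdots v_{i-1}|_1 \leq P_v(\ell-1) \leq P_v(\ell)$), this forces equality: $|v_{i-\ell+1}\cdots v_{i-1}|_1 = P_v(\ell)$.

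Splitting $\pref_{i-1}(v)$ at position $i-\ell$ then gives $P_v(i-1) = P_v(i-\ell) + P_v(\ell)$. For $z$, prefix normality applied to the factor $z_{i-\ell+1}\cdots z_i$ of length $\ell$ together with $z_i = 1$ gives $|z_{i-\ell+1}\cdots z_{i-1}|_1 \leq P_z(\ell) - 1$, whence $P_z(i-1) \leq P_z(i-\ell) + P_z(\ell) - 1$. Since $i-\ell \geq 1$ and $\ell < i$, both indices fall under the inductive hypothesis, so $P_v(i-\ell) \geq P_z(i-\ell)$ and $P_v(\ell) \geq P_z(\ell)$. Adding these yields $P_v(i-1) = P_v(i-\ell) + P_v(\ell) \geq P_z(i-\ell) + P_z(\ell) > P_z(i-1)$, contradicting $P_v(i-1) = P_z(i-1)$.

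I expect the main obstacle to be cleanly justifying the greedy argument: precisely identifying the $\flipext$-step responsible for producing the $0$ at position $i$ (whether $i$ sits strictly inside the zero-run of that step or right before its closing $1$) and concluding from its minimality that placing a $1$ at position $i$ instead would break prefix normality. Once that is carefully pinned down, the remainder of the proof is a mechanical invocation of Fact \ref{basicfacts} (iv) plus the two-half accounting of $1$'s determined by $\ell$.
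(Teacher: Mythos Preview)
Your proof is correct and follows essentially the same approach as the paper: reduce by strong induction (equivalently, minimal counterexample) to the case $v_i=0$, $z_i=1$, invoke the greedy choice in $\flipext$ together with Fact~\ref{basicfacts}\,(iv) to obtain a length $\ell$ at which a suffix of $v_1\cdots v_{i-1}$ is tight against the prefix count, and then split both $P_v(i-1)$ and $P_z(i-1)$ at position $i-\ell$ to compare. Your handling of the $z$-side is marginally cleaner---applying prefix normality of $z$ directly to the length-$\ell$ factor ending at position $i$ (and using $z_i=1$) yields the strict gap in one step, whereas the paper goes through a case distinction on $|z'|_1$ versus $|v'|_1$---but this is a cosmetic streamlining of the same argument.
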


\begin{proof} By contradiction, let $i > |w|$ be the smallest integer such that $P_v(i) < P_z(i)$. Then, by the minimality of $i$, we have $P_v(i-1) \geq P_z(i-1)$, hence $v_i = 0$ and $z_i = 1.$ The definition of the operation $\flipext$ together with $v_i = 0$ implies the existence of some $j > 0$ such that $P_v(j+1) = |v_{i-j} \cdots v_{i-1}|_1$ by Fact~\ref{basicfacts} {\em (iv)}, for otherwise we would have $v_i = 1.$ By the minimality of $i$ it must also hold that $P_z(j+1) \leq P_v(j+1)$. Let us write $v' = v_{i-j} \cdots v_{i-1}$ and $z' = z_{i-j} \cdots z_{i-1}$. Now assume that $|z'|_1 \geq |v'|_1$. Since $|v'|_1 = P_v(j+1) \geq P_z(j+1) \geq |z'|_1$, this implies $P_v(j+1) = P_z(j+1)$. But then $P_z(j+1) = P_v(j+1) = |v'|_1 < |z'|_1 + 1 = |z'z_i|_1$, in contradiction to $z$ being prefix normal. So we have $|z'|_1 < |v'|_1$. Once more by the minimality of $i$, it also holds that $P_v(i-j-1) \geq P_z(i-j-1)$, leading to
\[ P_v(i-1) = P_v(i-j-1) + |v'|_1 > P_z(i-j-1) + |z'|_1 = P_z(i-1),\]

which implies $P_v(i) \geq P_z(i)$, contradicting the initial assumption, and completing the proof. 
\end{proof}

\section{Conclusion}\label{sec:conclusion}

We presented a new recursive generation algorithm for prefix normal words of fixed length. The algorithm can also be used to generate all prefix normal words sharing the same critical prefix, thus serving as an aid for counting these words. The algorithm can generate the words either in lexicographic, or in a (combinatorial) Gray-code order.

We introduced infinite prefix normal words, and gave some results on the infinite extension of finite prefix normal words generated by a modified version of our algorithm. We found that the minimum prefix density, as well as its length, are important parameters of infinite prefix normal words. This fact allows us to make predictions about the structure of this infinite word, based on the starting prefix. A general investigation of infinite prefix normal words will be the subject of future research.

\section*{Acknowledgements}
We wish to thank two anonymous referees who read our paper very carefully and  contributed to improving  its presentation.

\bibliographystyle{abbrv}

\begin{thebibliography}{10}
	
	\bibitem{AmirAHLLR16}
	A.~Amir, A.~Apostolico, T.~Hirst, G.~M. Landau, N.~Lewenstein, and
	L.~Rozenberg.
	\newblock Algorithms for jumbled indexing, jumbled border and jumbled square on
	run-length encoded strings.
	\newblock {\em Theoret. Comput. Sci.}, 656:146--159, 2016.
	
	\bibitem{PhTRS-A14}
	A.~Amir, A.~Butman, and E.~Porat.
	\newblock On the relationship between histogram indexing and block-mass
	indexing.
	\newblock {\em Philosophical Transactions of The Royal Society A: Mathematical
		Physical and Engineering Sciences}, 372(2016), 2014.
	
	\bibitem{AmirCLL_ICAPL14}
	A.~Amir, T.~M. Chan, M.~Lewenstein, and N.~Lewenstein.
	\newblock On hardness of jumbled indexing.
	\newblock In {\em 41st International Colloquium on Automata, Languages, and
		Programming (ICALP 2014)}, volume 8572 of {\em LNCS}, pages 114--125, 2014.
	
	\bibitem{Blondin17a}
	A.~Blondin~Mass{\'{e}}, J.~de~Carufel, A.~Goupil, M.~Lapointe, {\'{E}}.~Nadeau,
	and {\'{E}}.~Vandomme.
	\newblock Leaf realization problem, caterpillar graphs and prefix normal words.
	\newblock {\em Theoret. Comput. Sci.}, 732:1--13, 2018.
	
	\bibitem{IJFCS12}
	P.~Burcsi, F.~Cicalese, G.~Fici, and {\relax Zs}.~Lipt{\'a}k.
	\newblock Algorithms for {J}umbled {P}attern {M}atching in {S}trings.
	\newblock {\em International Journal of Foundations of Computer Science},
	23:357--374, 2012.
	
	\bibitem{BGLRS_FUN14}
	P.~Burcsi, G.~Fici, {\relax Zs}.~Lipt\'ak, F.~Ruskey, and J.~Sawada.
	\newblock Normal, abby normal, prefix normal.
	\newblock In {\em Proc.\ of the 7th International Conference on Fun with
		Algorithms (FUN 2014)}, volume 8496 of {\em LNCS}, pages 74--88, 2014.
	
	\bibitem{BFLRS_CPM14}
	P.~Burcsi, G.~Fici, {\relax Zs}.~Lipt\'ak, F.~Ruskey, and J.~Sawada.
	\newblock On combinatorial generation of prefix normal words.
	\newblock In {\em Proc.\ of the 25th Ann. Symp. on Comb. Pattern Matching (CPM
		2014)}, volume 8486 of {\em LNCS}, pages 60--69, 2014.
	
	\bibitem{BFLRS_TCS17}
	P.~Burcsi, G.~Fici, {\relax Zs}.~Lipt{\'{a}}k, F.~Ruskey, and J.~Sawada.
	\newblock On prefix normal words and prefix normal forms.
	\newblock {\em Theor. Comput. Sci.}, 659:1--13, 2017.
	
	\bibitem{ChanL15}
	T.~M. Chan and M.~Lewenstein.
	\newblock Clustered integer {3SUM} via additive combinatorics.
	\newblock In {\em Proc.\ of the 47th Ann. {ACM} on Symp. on Theory of Computing
		({STOC} 2015)}, pages 31--40, 2015.
	
	\bibitem{CicaleseLWY14}
	F.~Cicalese, E.~S. Laber, O.~Weimann, and R.~Yuster.
	\newblock Approximating the maximum consecutive subsums of a sequence.
	\newblock {\em Theoret. Comput. Sci.}, 525:130--137, 2014.
	
	\bibitem{CLR_LATA18}
	F.~Cicalese, {\relax Zs}.~Lipt\'ak, and M.~Rossi.
	\newblock {Bubble-Flip}---{A New Generation Algorithm for Prefix Normal Words}.
	\newblock In {\em Proc.\ of the 12th International Conference Language and
		Automata Theory and Applications (LATA 2018)}, volume 10792 of {\em LNCS},
	pages 207--219, 2018.
	
	\bibitem{CunhaDGWKS17}
	L.~F.~I. Cunha, S.~Dantas, T.~Gagie, R.~Wittler, L.~A.~B. Kowada, and J.~Stoye.
	\newblock Faster jumbled indexing for binary {RLE} strings.
	\newblock In {\em 28th Annual Symposium on Combinatorial Pattern Matching
		({CPM} 2017) in: LIPIcs 78}, pages 19:1--19:9, 2017.
	
	\bibitem{FL11}
	G.~Fici and {\relax Zs}.~Lipt{\'a}k.
	\newblock On prefix normal words.
	\newblock In {\em Proc.\ of the 15th Intern.\ Conf.\ on Developments in
		Language Theory (DLT 2011)}, volume 6795 of {\em LNCS}, pages 228--238.
	Springer, 2011.
	
	\bibitem{GagieHLW15}
	T.~Gagie, D.~Hermelin, G.~M. Landau, and O.~Weimann.
	\newblock Binary jumbled pattern matching on trees and tree-like structures.
	\newblock {\em Algorithmica}, 73(3):571--588, 2015.
	
	\bibitem{GiaGrab_IPL13}
	E.~Giaquinta and S.~Grabowski.
	\newblock New algorithms for binary jumbled pattern matching.
	\newblock {\em Inf. Process. Lett.}, 113(14--16):538--542, 2013.
	
	\bibitem{TAOCPweb}
	D.~Knuth.
	\newblock {The Art of Computer Programming (TAOCP)}.
	\newblock {\tt http://} {\tt www-cs-faculty.stanford.edu/\~{}knuth/taocp.html}.
	\newblock Accessed 15-12-2017.
	
	\bibitem{TAOCP4A}
	D.~E. Knuth.
	\newblock {\em The Art of Computer Programming, Volume 4, Fascicle 3:
		Generating All Combinations and Partitions}.
	\newblock Addison-Wesley Professional, 2005.
	
	\bibitem{KociumakaRR17}
	T.~Kociumaka, J.~Radoszewski, and W.~Rytter.
	\newblock Efficient indexes for jumbled pattern matching with constant-sized
	alphabet.
	\newblock {\em Algorithmica}, 77(4):1194--1215, 2017.
	
	\bibitem{Lothaire3}
	M.~Lothaire.
	\newblock {\em Algebraic Combinatorics on Words}.
	\newblock Cambridge Univ. Press, 2002.
	
	\bibitem{MoosaR_JDA12}
	T.~M. Moosa and M.~S. Rahman.
	\newblock Sub-quadratic time and linear space data structures for permutation
	matching in binary strings.
	\newblock {\em J. Discr. Alg.}, 10:5--9, 2012.
	
	\bibitem{Ruskeybook}
	F.~Ruskey.
	\newblock {\em Combinatorial Generation}.
	\newblock 2003.
	
	\bibitem{RSW12}
	F.~Ruskey, J.~Sawada, and A.~Williams.
	\newblock Binary bubble languages and cool-lex order.
	\newblock {\em J. Comb. Theory, Ser. A}, 119(1):155--169, 2012.
	
	\bibitem{RSW12a}
	F.~Ruskey, J.~Sawada, and A.~Williams.
	\newblock De {Bruijn} sequences for fixed-weight binary strings.
	\newblock {\em {SIAM} J. Discrete Math.}, 26(2):605--617, 2012.
	
	\bibitem{SW12}
	J.~Sawada and A.~Williams.
	\newblock Efficient oracles for generating binary bubble languages.
	\newblock {\em Electr. J. Comb.}, 19(1):P42, 2012.
	
	\bibitem{SWW17}
	J.~Sawada, A.~Williams, and D.~Wong.
	\newblock Inside the {Binary} {Reflected} {Gray} {Code}: {Flip-Swap} languages
	in 2-{Gray} code order.
	\newblock Unpublished manuscript, 2017.
	
	\bibitem{sloane}
	N.~J.~A. Sloane.
	\newblock The {O}n-{L}ine {E}ncyclopedia of {I}nteger {S}equences.
	\newblock Available electronically at \url{http://oeis.org}.
	
\end{thebibliography}

\end{document}